\newtheorem{remark}{Remark}
\newtheorem{example}{Example}
\newtheorem{lemma}{Lemma}
\newtheorem{proposition}{Proposition}
\def\QEDclosed{\mbox{\rule[0pt]{1.3ex}{1.3ex}}} 
\def\QED{\QEDclosed} 
\title{Effect of payload size on goodput
when message segmentations occur
for wireless networks:
Case of packet corruptions recovered by stop-and-wait protocol
}
\author{Takashi Ikegawa
\thanks{T.~Ikegawa is with Waseda Research Institute for Science and Engineering,
Waseda University and 
Graduate School of Mathematical Sciences, the University of Tokyo, Japan
(e-mail: ikegawa@aoni.waseda.jp or tikegawa@ms.u-tokyo.ac.jp).}
}
\begin{document}

\maketitle

\begin{abstract}

This paper investigates the effect of payload size on goodput 
for wireless networks
where packets created from a message through a segmentation function
are lost due to bit errors and they are recovered by a stop-and-wait protocol.
To achieve this,
we derive the exact analytical form of goodput
using the analytical form of a packet-size distribution,
given a message-size distribution and a payload size.
In previous work, the packet sizes are assumed to be constant, 
which are payload size plus header size,
although actual segmented packets are not constant in size.
Hence,
this constant packet-size assumption may be not justified for goodput analysis.
From numerical results,
we show that the constant packet-size assumption is not justified under low bit-error rates.
Furthermore,
we indicate that the curves of goodput are concave in payload size
under high bit-error rates.
In addition,
we show that the larger mean bit-error burst length yields less concave curves of goodput.

\end{abstract}

\begin{keywords}
Message segmentation, wireless networks,
payload size, goodput, burst bit error, stop-and-wait protocol.

\end{keywords}

\section{Introduction}
\label{sec: Introduction}

In the past decades,
the evolution of wireless communication technologies,
such as widespread introduction of wireless local area networks (WLANs) and
emergence of cellular or Wi-Fi wide area networks,
changed our life style dramatically.
Wireless networks allow users to provide mobile communication service,
which is a more beneficial characteristic compared with wired networks.
However, the wireless networks 
have two main characteristics with negative impact on quality of service (QoS):
high bit-error rate and low link capacity.

For wireless networks,
which use open air as the transmission medium,
high bit-error rates ($10^{-2} - 10^{-6}$) are observed.
Hence,
wireless networks exhibit
unacceptable corruption probabilities of packets,
i.e., data units transferred over networks.

Wireless networks still provide lower link capacity,
compared to the fiber optical wired links.
The range of the link capacity for popular WLANs is $11 - 300$~Mbps.
Because many terminals share the same transmission medium,
the attainable capacity of each terminal becomes further smaller. 

To provide an error-free transmission service of messages,
i.e., data units generated by reliable applications such as transfer of Web pages and e-mail,
over the wireless networks,
solutions to overcome the high bit-error rate problem are required.
The straightforward solution is implementation of an error-recovery function for each terminal (or host),
which allows a sender to retransmit packets that are lost due to bit error and congestion.
For example,
IEEE 802.11 standard media access control protocol for WLANs \cite{CRO97} 
specifies a stop-and-wait protocol (SWP) 
to realize the error-recovery function 
in a simple manner.

Messages are frequently larger than the maximum permitted packet size, namely payload size.
To convey such messages over the network, 
a sender implements a message-segmentation function.
It enables the sender to divide a single message larger than the payload size into multiple packets.

The packet-size distribution has been significantly changed from the message-size distribution
through message segmentations.
For example,
papers \cite{FRA03, ALL00} show this tendency using actual traffic measurements.
On the other hand,
paper \cite{IKE12_PerEva} derives the analytical form of a packet-size distribution
when a message-size distribution and a payload size are given.
To achieve this,
in paper \cite{IKE12_PerEva} terms called {\it body} and {\it edge} packets are introduced.
The body packet is defined as a segmented packet appearing
between the head and penultimate packets in the original message.
The edge packet is the final segmented packet
if a message is segmented, or the message itself if it is not segmented. 
The sizes of body packets are equal to payload size plus header size,
whereas those of edge packets are variable,
not to exceed payload size plus header size.
From numerical results based on traffic measurement,
the findings include that the edge-packet occurrence probability is not negligible in some cases \cite{IKE12_PerEva}.

The packet size affects QoS measures such as mean response time and goodput
because link-level transmission delay and 
packet-corruption probability depends on the packet size.
For more details,
link-level transmission delay is simply given by the packet size divided by the link capacity \cite[pp.~64--65]{KUR16}
and the packet-corruption probability is approximately proportional to the packet size \cite[p.~132]{SCH87}.
Especially,
the packet size significantly affects these QoS measures
for high bit-error prone and/or low bandwidth links such wireless links.

The packet sizes are restricted to the payload size. 
Furthermore, the payload size is one of the controllable (manageable) parameters. 
If payload size is enough small,
the wireless networks are operating inefficiently
because an overhead such as header per packet transmission
is not negligible. 
On the other hand, 
if the packet size is larger, 
the packet is more likely to be corrupted,
necessitating more retransmissions and resulting in reduction of goodput. 
To solve this tradeoff issue,
many studies such as \cite{MOD99, VUR08, JEL08, LIN09, CIC11, DON14, KIM17}
proposed the adaptive payload sizing scheme.

In previous studies, however,
segmented-packet-size sequence behavior through the message-segmentation function,
has not been taken into consideration.
Thus,
the all packet sizes are assumed to be constant,
which are payload size plus header one,
although the edge-packet occurrence probability is assumed be negligible.

The purpose of this paper is to answer the following research questions:
\begin{description}

\item{Q1: }
What is approximation accuracy of
constant packet-size assumption?, and

\item{Q2:} What is relationship between payload 
size and goodput in the case of the burst bit-error occurrence
in addition to the independent one?

\end{description}
To achieve this,
we develop the exact analytical form of goodput
when segmented packets through the message-segmentation function
are lost due to bit error and they are recovered by an SWP.

The rest of the paper is organized as follows.
In the next section, we describe the communication network model
underlying our study.
In Section~\ref{sec: packet size sequence},
a model of a segmented packet size sequence is provided,
given the message-size distribution and the payload size.
Section~\ref{sec: goodput} derives the exact analytical form of goodput.
Section~\ref{sec: results} investigates the effect of payload size
on good for a simple scenario
where message sizes are constant.
Finally, Section~\ref{sec: conclusion} summarizes this paper and mentions future work. 


\section{Communication network model}
\label{sec: network model}

\begin{figure}
\centering
\epsfig{file=./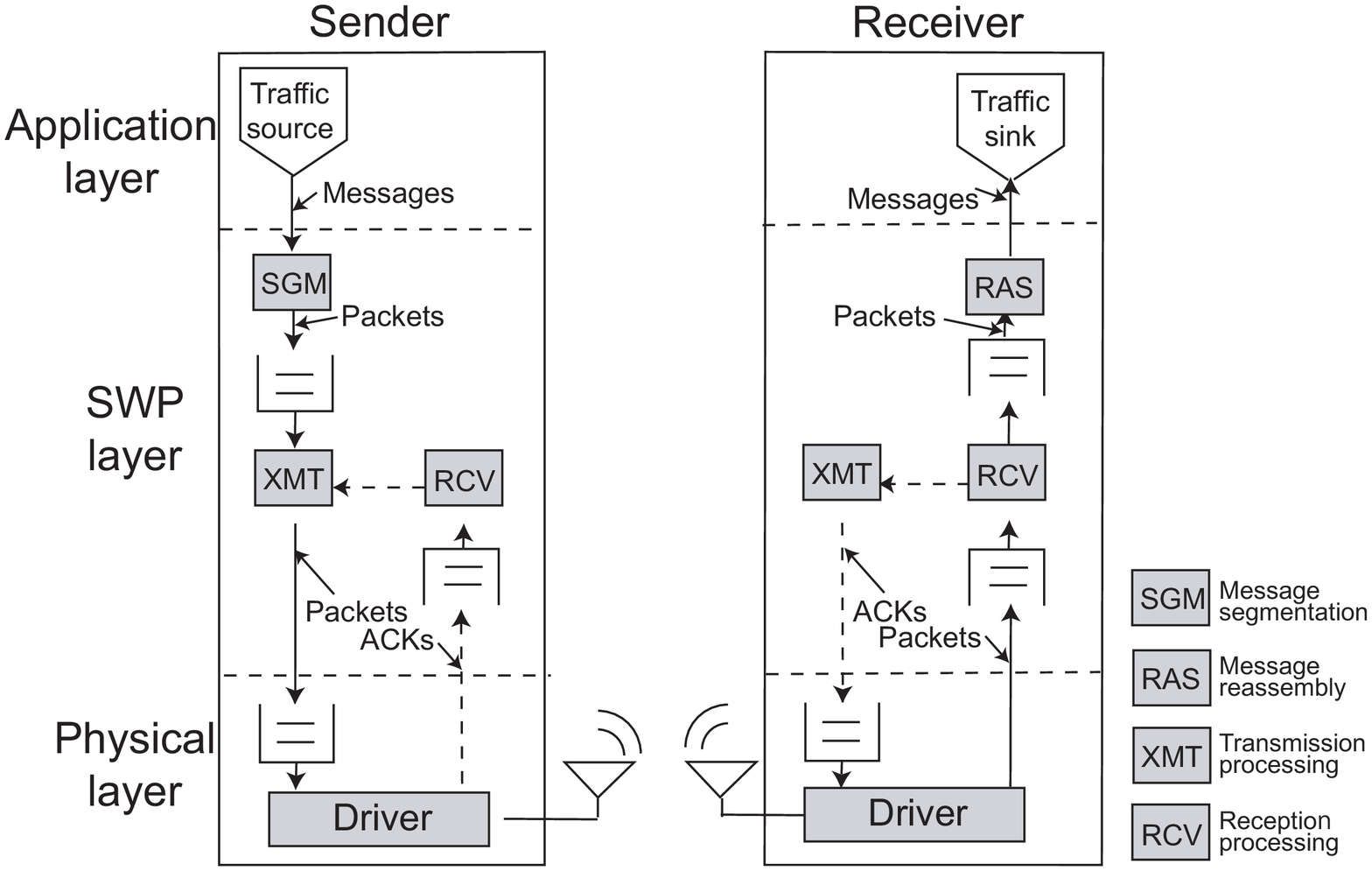, width=12cm} \\
\caption{Communication network model.}
\label{fig: network}
\end{figure}

In this section,
we first explain the two-layered communication network model
under consideration.
Next,
the model of data units introduced in this paper
at the respective layer is described.

\subsection{Layer model}

We consider a communication network
whose conceptual representation is shown in Fig.~\ref{fig: network}.
Each station (a sender and a receiver) has three layers.
The middle layer is referred to as an SWP layer.
It implements message segmentation-reassembly and error-recovery functions.
The error-recovery function is assumed to be implemented in a stop-and-wait scheme.
The layer above the SWP layer, namely the application layer, 
contains a traffic source and sink.
The traffic source generates the data units.
On the other hand,
the traffic sink terminates the corresponding data units.
The layer below the SWP layer,
namely the physical layer,
contains an entity that can transfer data units over wireless links at a sender.

\subsection{Data unit model}

We define data units exchanged between peer entities at the respective layer: messages and packets.

\begin{description}

\item{\bf Message:}
a data unit generated at a traffic source of the application layer
with a given size distribution.

\item{\bf Packet:}
a data unit created from a message through a segmentation-reassembly function,
and transferred over wireless links.
It consists of an information field and the appropriate protocol control information (PCI),
such as a header and/or trailer.
The information field contains a (divided) message.

\ \ The message-segmentation function implemented in the sender's segmentation-reassembly layer
enables a single message to be divided
into multiple packets
if the message size is larger than the
payload size, denote by $\ell^{(\rm{d})} (>0)$.
The receiver's segmentation-reassembly layer performs a message-reassembly function,
thus reassembling the segmented packets before delivering them
to the application layer.

\end{description}

\subsection{Burst bit error occurrence model}
\label{sec: burst bit error}

\begin{figure}
\centering{
\epsfig{file=./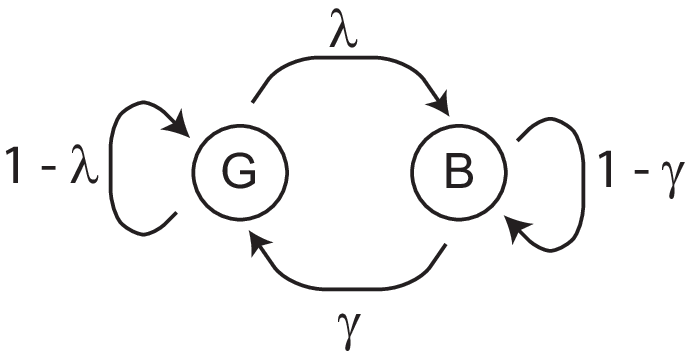, width=6cm} \\
\caption{Bit error model using two-state Markov chain.}
\label{fig: bit error model}
}
\end{figure}

Bit errors for WLANs
are reported to occur in bursts\cite{WIL02}.
%
%
One of the simple but capable of capturing the bit-error burstiness
is the Gilbert model \cite{GIL60}.
The process of the Gilbert model is a two-state Markov chain
with a transition diagram shown in Fig.~\ref{fig: bit error model}.
This Markov property enables us to obtain the form of goodput in a simple manner,
as will be mentioned in Section~\ref{sec: goodput}.

State $\rm{G}$ in this Markov chain represents the link in a ``good'' state,
indicating that it is operating at very low bit-error rate
denoted by $p^{(\rm{G})}$.
On the other hand,
state $\textrm{B}$ represents
the link is operating in a fading or shadowing condition 
at high bit-error rate, 
i.e., in a ``bad'' state,
denoted by $p^{(\rm{B})}$.
Within each state,
bit errors occur independently at their own rates.

Let $\upsilon(t) \in \mathcal{C}$
where $\mathcal{C} \stackrel{\triangle}{=} \{\textrm{G}, \textrm{B}\}$
be the link state at discrete time $t$ whose unit is bit.
We denote a transition rate from state $\textrm{G}$ to state $\textrm{B}$
by $\lambda$ where $0 \le \lambda \le 1$
and that from state $\textrm{B}$ to state $\textrm{G}$
by $\gamma$ where $0 \le \gamma \le 1$.
Then, the stochastic process $\{\upsilon(t); t =0, 1, 2, \cdots \}$ is represented as 
a discrete-time Markov chain with a transition matrix
\begin{align}
\mathbf{P}_\textrm{c} &\stackrel{\triangle}{=}
\left[\Pr\left(\upsilon(t+1)=\eta \, | \, \upsilon(t)=\xi\right), \eta \in \mathcal{C}, \xi \in \mathcal{C}\right] \notag \\
%
&= 
\begin{pmatrix}
1 - \lambda & \lambda \\
\gamma & 1 - \gamma
\end{pmatrix}.
\label{eq: matrix-P}
\end{align}

We let $\mathbf{\pi}_{\rm{c}}=(\pi^{(\rm{G})}, \pi^{(\rm{B})})$ be
the stationary probability vector of process $\{\upsilon(t)\}$,
i.e., the stationary probability vector of transition probability matrix $\mathbf{P}_{\rm{c}}$.
Then, its elements are 
given by
\begin{align}
\label{eq: pi}
\pi^{(\rm{G})} &= \cfrac{\gamma}{\lambda + \gamma} \quad\text{and}\quad 
\pi^{(\rm{B})} = \cfrac{\lambda}{\lambda + \gamma}.
\end{align}
Here, $\pi^{(\rm{G})}$ is the stationary state probability that the link is
in state $\textrm{G}$, and 
$\pi^{(\rm{B})}$ is that in state $\textrm{B}$
(usually $\pi^{(\rm{G})} > \pi^{(\rm{B})}$).

We define the \textit{mean} bit-error rate $p_{\rm{e}}$ as
\begin{align}
\label{eq: p_e}
p_{\rm{e}} &= 
\pi^{(\rm{G})} p^{(\rm{G})} + \pi^{(\rm{B})} p^{(\rm{B})},
\end{align}
assuming that $0\le p_{\rm{e}} <1$.

\bigskip

\begin{remark}{\it i.i.d. bit error model.}
\label{remark: i.i.d. bit error model}
Assume that bit errors occur independently
with a fixed bit-error rate $p_{\rm{e}}$.
This \textit{i.i.d.}~bit error model
is identical to 
the burst bit-error model
in the following cases:
$\lambda = 0$, $\gamma = 0$, or $\lambda + \gamma = 1$.
Thus, the burst bit-error model introduced above
accommodates the i.i.d.~model.
Actually,
the results from the correlated bit-error model
in these cases
agree with those from the i.i.d.~model
(see Proposition~\ref{pro: pr N L i i d}).
\hspace*{\fill}~\QED
\end{remark}




\subsection{Assumptions}

For analytical tractability,
we make the following assumptions.

\begin{description}

\item[\texttt{A1}:]
Message sizes are mutually independent and 
identically distributed 
according to a common message-size distribution function $F^{(m)}(\cdot)$.
The distribution $F^{(m)}(\cdot)$ has a finite mean value $\ell^{(m)}$,
which is referred to as the mean message size.

\item[\texttt{A2}:]
When a sender cannot receive a reply from a receiver, i.e., acknowledgement,
within in a specific {\it constant} interval timeout
after transmission,
it retransmits the lost packet.

\ We use notation $t_{\rm{out}}$ to express the value of the interval timeout in sec,
whereas $\hat{t}_{\rm{out}}$ which is expressed in bits, instead of $t_{\rm{out}}$,
will be used for goodput analysis in Section~\ref{sec: goodput}.
In this paper, the value of $t_{\rm{out}}$ is given by $t_{\rm{out}} \, \mu_{\rm{c}}$
where $\mu_{\rm{c}}$ is the capacity of the wireless link.

\item[\texttt{A3}:] 
We denote size of the acknowledgement packet, that is ACK, by $\ell^{(\rm{ACK})}$.

\item[\texttt{A4}:] 
The maximum number of retransmission attempts of the same packet is infinite.

\item[\texttt{A5}:]
The sender operates under a heavy traffic assumption,
which implies that the sender's SWP layer always has a packet
available to be sent.

\item[\texttt{A6}:]
the size of PCI is constant
and equal to $\ell^{({\rm h})}$.

\item[\texttt{A7}:]
the packet of the first transmission arrives at the link in state $\textrm{G}$ 
with probability $\pi_\textrm{G}$ and in state $\textrm{B}$ with probability $\pi^{(\rm{B})}$.

\item[\texttt{A8}:]
the sum of processing delay and propagation delay, which is independent of a packet size,
is constant and is denoted by $t_{\rm{pro}}$.

\end{description}

\medskip

\section{Packet size sequence model}
\label{sec: packet size sequence}

\begin{figure}
\centering{
\epsfig{file=./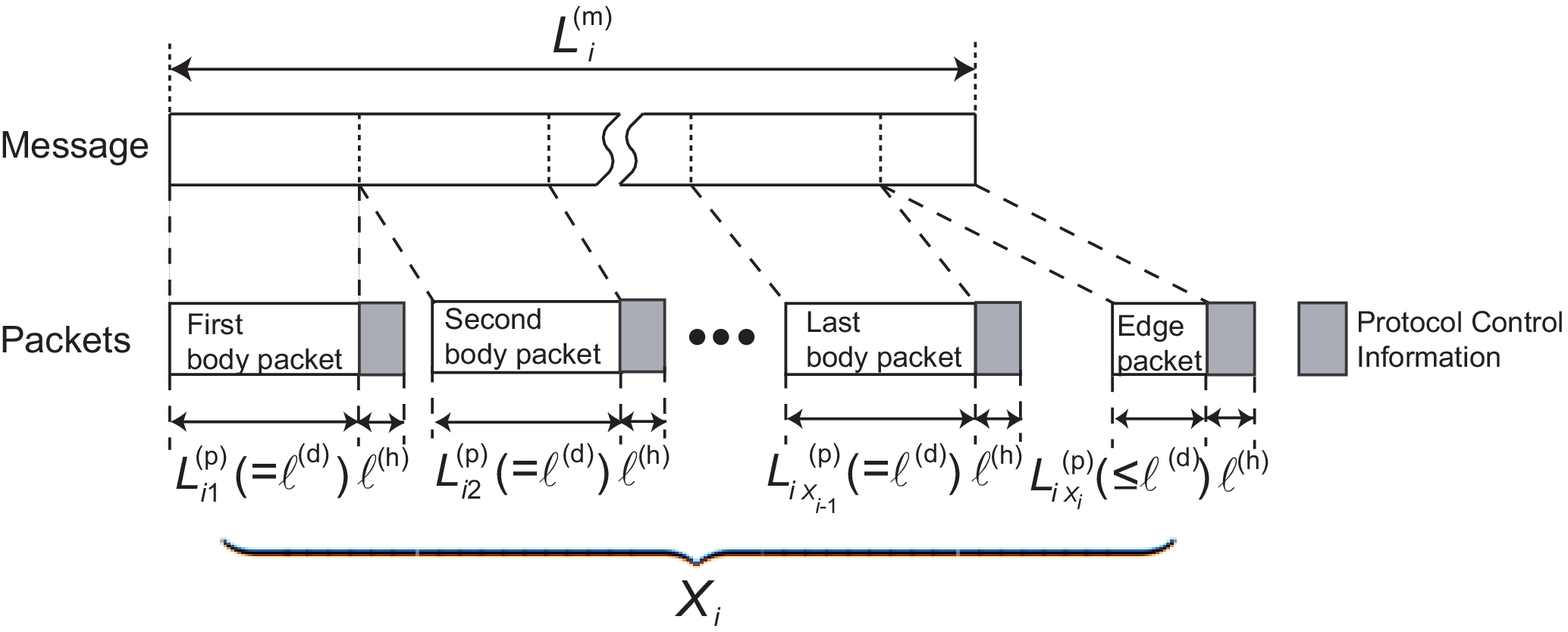, width=10cm} \\
\caption{Packets created from message whose size is $L_i^{({\rm m})}$.}
\label{fig: message packet}
}
\end{figure}

The creation of packets from a message through message segmentation is shown in Fig.~\ref{fig: message packet}.
As shown in Fig.~\ref{fig: message packet},
if $L_i^{({\rm m})} > \ell^{(\rm{d})}$,
then the $i$~th message is divided into multiple packets.
Denoting the size of an information field of the $j$~th packet
for the $i$th message by $L^{({\rm p})}_{i, j}$,
where $i=1, 2, \cdots,$ and $j = 1, 2, \cdots, X_i$,
we have
\begin{align}
\label{eq: L_p_i}
L^{({\rm p})}_{i, j} &= 
\begin{cases}
\ell^{(\rm{d})}, & \text{$j=1, 2, \ldots, X_i-1$,} \\
L^{({\rm m})}_i - (X_i-1) \, \ell^{(\rm{d})}, & \text{$j = X_i$},
\end{cases} 
\end{align}
with $X_i = \lceil L_i^{({\rm m})}/{\ell^{(\rm{d})}} \rceil$.
Here, the operator $\lceil a \rceil$ represents the smallest integer that is greater than or equal to $a$. 

We categorize a set of packets for the $i$th message into two kinds: body and edge packets.
we refer to a segmented packet appearing between the head and the penultimate packets 
i.e., packet from first to $(X_i - 1)$~st as a {\it body} packet, and
the final packet, i.e., $X_i$~th packet as an {\it edge} packet.

If $L_i^{({\rm m})} \le \ell^{(\rm{d})}$,
then the $i$th message is not segmented,
and a single packet, whose information field is identical to the original message,
is generated.
We also refer to this as edge packet
because it satisfies \eqref{eq: L_p_i}.
We note that the size of body packets are always equal to $\ell^{(\rm{d})} + \ell^{(\rm{h})}$, 
whereas that of edge packets is variable but does not exceed $\ell^{(\rm{d})} + \ell^{(\rm{h})}$.

\begin{remark}
The edge-packet size is less than {\it or equal to} $\ell^{(\rm{d})} + \ell^{(\rm{h})}$.
For example,
edge-packet sizes equal $\ell^{(\rm{d})} + \ell^{(\rm{h})}$
when message sizes are a multiple of $\ell^{(\rm{d})}$
although message segmentations happen
(see Section~\ref{sec: validation}).
\hspace*{\fill}~\QED
\end{remark}

We construct a stochastic process $\{L^{(p)}_\kappa: \kappa = 1, 2, \cdots\}$,
replacing the pair of epoch labels $i j$ with an in-sequence number $\kappa = 1, 2, \cdots$
for $\{L^{(p)}_{i \, j}\}$ in \eqref{eq: L_p_i}.

\begin{figure}
\centering{
\epsfig{file=./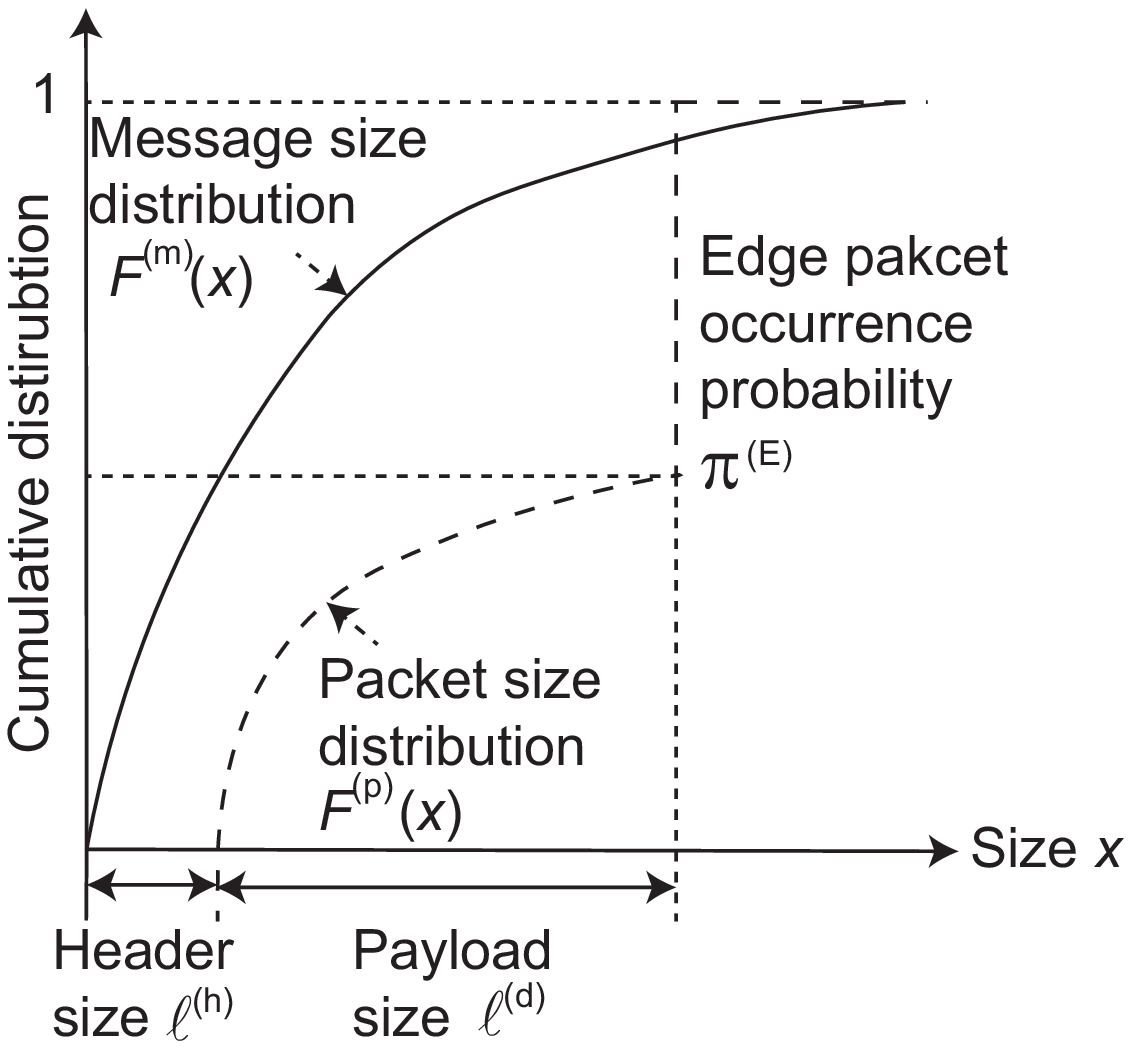, width=6cm} \\
\caption{Sketch of packet size distribution.}
\label{fig: packet size}
}
\end{figure}

Denoting the stationary packet size distribution by $F^{(\rm{p})}(\cdot)$,
the assumptions of \texttt{A1} and \texttt{A2}, using the argument \cite{IKE12_PerEva},
we have
\begin{align}
\label{eq: F^p}
F^{(\rm{p})}(x)
&\stackrel{\triangle}{=} \Pr\left(L^{(\rm{p})}_\kappa \le x\right)
= \left(1 - \pi^{(\rm{E})}\right) \textbf{1}\left(x - (\ell^{(\rm{d})} + \ell^{(\rm{h}))}\right)
+ \pi^{(\textrm{E})} F^{(\rm{E})}(x).
\end{align}
where $\pi^{(\rm{E})}$ is an edge-packet occurrence probability,
and $F^{(\rm{E})}(\cdot)$ is a stationary edge-packet size distribution
(see Fig~\ref{fig: packet size}).

The forms of $\pi^{(\rm{E})}$ and $F^{(\rm{E})}(\cdot)$ are given by
\begin{align}
\label{eq: pi^E}
\pi^{(\rm{E})} &= 
\cfrac{1} 
{
\displaystyle\sum_{s=0}^\infty \, \int_{s \, \ell^{(\rm{d})}}^\infty \, dF^{(\rm{m})}(x)
}
=
\cfrac{1} 
{
\displaystyle\sum_{s=0}^\infty \, \left(1 - F^{(\rm{m})}(s \, \ell^{(\rm{d})})\right)
},
\end{align}
and
\begin{align}
\label{eq: F^E}
F^{(\rm{E})}(x) 
&= \begin{cases}
0, & 0 \le x < \ell^{(\rm{h})}, \\
\displaystyle \sum_{s = 0}^\infty
\left(F^{(\rm{m})}(x + s \, \ell^{(\rm{d})} - \ell^{(\rm{h})}) 
- F^{(\rm{m})}(s \, \ell^{(\rm{d})} - \ell^{(\rm{h})})\right),
& \ell^{(\rm{h})} \le x \le \ell^{(\rm{d})} + \ell^{(\rm{h})}, \\
1, & x > \ell^{(\rm{d})} + \ell^{(\rm{h})}.
\end{cases}
\end{align}

From the definition of body packets, 
the stationary body-packet-size distribution is clearly given by $\textbf{1}(x - \ell^{(\rm{d})} - \ell^{(\rm{h})})$. 
The occurrence probabilities of body packets and edge packets are given by $1 - \pi^{(\textrm{E})}$ and $\pi^{(\textrm{E})}$, respectively.

Letting $\ell^{(\rm{p})}(\stackrel{\triangle}{=} E[L^{(\rm{p})}_\kappa])$
be the mean of stationary distribution of $\{L^{(\rm{p})}_\kappa\}$,
from 
we have
\begin{align}
\ell^{(\rm{p})} &= \pi^{(\textrm{E})} \, \ell^{(\rm{m})} + \ell^{(\rm{h})}.
\label{eq: ell p}
\end{align}

\begin{example}{\it Case of discrete message-size distribution\cite[Example~4]{IKE12_PerEva}.} 
\label{example: Case of discrete message-size distributions}
Consider the case where the message-size distribution function $F^{(m)}(\cdot)$
is given by 
\begin{align}
\label{eq: F m_d}
F^{(\rm{m})}(x) &=\sum_{i=1}^{n_{\rm{d}}} \omega^{(\rm{m})}_i \textbf{1}\left(x - \ell^{(\rm{m})}_i\right),
\end{align}
where $n_d \ge 1$, $w^{(\rm{m})}_i > 0$, $\ell^{(\rm{m})}_i > 0$
for $i = 1, 2, \cdots, n_{\rm{d}}$, and $\sum_{i=1}^{n_{\rm{d}}} w^{(\rm{m})}_i = 1$.

The form of $\pi^{(\rm{E})}$ is given by $\{\sum_{i=1}^{n_{\rm{d}}} w^{(\rm{m})}_i k_i\}^{-1}$ 
with $k_i = \lceil \ell^{(\rm{m})}_i / \ell_d \rceil$.
This can be intuitively shown from the fact
that 1) $k_i$ packets are created from one message of size $\ell^{(\rm{m})}_i$, and 
2) they consist of $k_i - 1$ packets of size $\ell^{(\rm{d})} + \ell^{(\rm{h})}$, that is a body packet,
and one edge packet.
The packet-size distribution can be written as
\begin{align}
\label{eq: F p d}
F^{(\rm{p})}(x) &= 
\left(1 - \pi^{(\rm{E})}\right) \, 
\textbf{1}\left(x - \ell^{(\rm{d})} - \ell^{(\rm{h})}\right)
+ \pi^{(\rm{E})} \, \sum_{i=1}^{n_{\rm{d}}} \,
w^{(\rm{m})}_i \, 
\textbf{1}\left(x - \ell^{(\rm{m})}_i + (k_i - 1) \, \ell^{(\rm{d})} - \ell^{(\rm{h})}\right).
\end{align}
The form of \eqref{eq: F p d} can be rewritten as
\begin{align}
\label{eq: F p d another 1}
F^{(\rm{p})}(x) &\stackrel{\triangle}{=} 
\sum_{i=0}^{n_{\rm{d}}} w^{(\rm{p})}_i \textbf{1}\left(x - \ell^{(\rm{p})}_i\right),
\end{align}
where
\begin{subequations}
\begin{align}
\label{eq: F p d another 2}
&\begin{cases}
w^{(\rm{p})}_0 = 1 - \pi^{(\rm{E})},\\
l^{(\rm{p})}_0 = \ell^{(\rm{d})} +\ell^{(\rm{h})},
\end{cases} \\
&\begin{cases}
w^{(\rm{p})}_i = \pi^{(\textrm{E})} \, w^{(\rm{m})}_i,\\
\label{eq: F p d another 3}
l^{(\rm{p})}_i = \ell^{(\rm{m})}_i - (k_i - 1) \, \ell^{(\rm{d})} + \ell^{(\rm{h})},
\end{cases}i=1, 2, \cdots, n_{\rm{d}}.
\end{align}
\end{subequations}
\hspace*{\fill}~\QED
\end{example}


\section{Goodput analysis}
\label{sec: goodput}

\begin{figure}
\centering{
\epsfig{file=./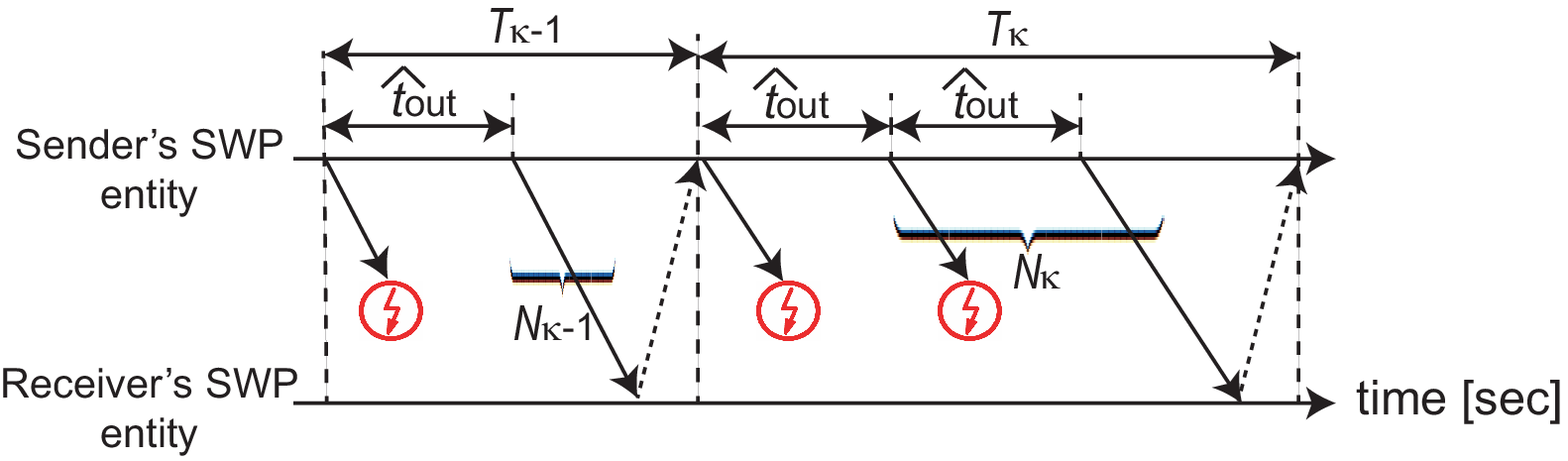, width=10cm} \\
\caption{Example of random variables
$\{T_\kappa\}$ and $\{N_\kappa\}$.}
\label{fig: T N}
}
\end{figure}

In this section,
we derive the form of {\it long-run} goodput.
Let $G_{\rm{p}}$ be long-run goodput of a single SWP connection,
which is defined as the mean number of bits 
by a receiver's SWP entity
per unit time [sec].
To derive the form of $G_{\rm{p}}$, we introduce the following random variables as shown in Fig.~\ref{fig: T N}:
\begin{description}

\item{$T_\kappa$:} time interval between two successful transmissions of the $\kappa$th packet, 
which is expressed in seconds,

\item{$N_\kappa$:} number of transmissions until when the $\kappa$th packet has been successfully transmitted,
$1 \le N_\kappa \le \infty$.

\end{description}

From the theory of renewal reward process \cite[Sec.~3.9]{ROS92},
we have the following proposition.

\begin{proposition}
\label{pro: G}
The form of goodput $G_{\rm{p}}$ is given by
\begin{align}
\label{eq: G}
G_{\rm{p}} &= \cfrac{
\ell^{(\rm{p})} - \ell^{(\rm{h})}
}
{\displaystyle\int_{x=\ell^{(\rm{h})}}^{\ell^{(\rm{d})} + \ell^{(\rm{h})}} \,
E\left[T_\kappa \, | \, L_\kappa^{(\rm{p})} = x\right] \, \, dF^{(\rm{p})}(x)},
\end{align}

\end{proposition}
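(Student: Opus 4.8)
The plan is to read the stream of successfully delivered packets as a renewal--reward (more precisely, regenerative--reward) process and to invoke the renewal--reward theorem of \cite[Sec.~3.9]{ROS92}, identifying the ``reward'' of a cycle with the information bits handed by the receiver's SWP entity to the application layer and the ``cycle length'' with the time to push one packet (or one message) through the protocol.

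First I would fix the regeneration structure. The segmentation rule \eqref{eq: L_p_i} makes the packet sizes $L^{(\rm p)}_{i,1},\dots,L^{(\rm p)}_{i,X_i}$ of a single message mutually dependent, so the per-packet pairs $\{(L^{(\rm p)}_\kappa,T_\kappa)\}$ are \emph{not} i.i.d.; however, by \texttt{A1} the messages are i.i.d., and by \texttt{A4} every packet is eventually delivered ($N_\kappa<\infty$ a.s.), so the epochs at which the first transmission of a new message's head packet occurs are genuine regeneration points. Taking the $i$th cycle to be the successful delivery of the whole $i$th message, its length is $\sum_{j=1}^{X_i}T_{i,j}$, while the bits it contributes equal $L^{(\rm m)}_i$ because the information fields of the $X_i$ packets tile the message (equivalently $\sum_{j=1}^{X_i}(L^{(\rm p)}_{i,j}-\ell^{(\rm h)})=L^{(\rm m)}_i$, with $L^{(\rm p)}$ measured including the $\ell^{(\rm h)}$-bit PCI as in \eqref{eq: F^p}). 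The renewal--reward theorem then gives
\[
G_{\rm p}=\frac{E\bigl[L^{(\rm m)}_i\bigr]}{E\bigl[\sum_{j=1}^{X_i}T_{i,j}\bigr]}=\frac{\ell^{(\rm m)}}{E[X_i]\,E[T_\kappa]},
\]
where the second equality uses \texttt{A7}: since \texttt{A7} makes the conditional law of each $T_{i,j}$ given $L^{(\rm p)}_{i,j}$ the stationary one, independent of $j$ and of the past, a Wald-type identity yields $E[\sum_{j}T_{i,j}]=E[X_i]\,E[T_\kappa]$.

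Next I would convert this into the stated form. The edge-packet occurrence probability is the long-run fraction of packets that are edge packets, i.e.\ one per message, so $\pi^{(\rm E)}=1/E[X_i]$; combined with \eqref{eq: ell p}, $\ell^{(\rm m)}/E[X_i]=\pi^{(\rm E)}\ell^{(\rm m)}=\ell^{(\rm p)}-\ell^{(\rm h)}$, which is the numerator of \eqref{eq: G}. For the denominator, the tower property gives $E[T_\kappa]=\int E[T_\kappa\mid L^{(\rm p)}_\kappa=x]\,dF^{(\rm p)}(x)$, and by \eqref{eq: F^p} the support of $F^{(\rm p)}$ is contained in $[\ell^{(\rm h)},\,\ell^{(\rm d)}+\ell^{(\rm h)}]$, so the integral may be restricted to that interval, producing exactly \eqref{eq: G}.

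The main obstacle is not the algebra but the probabilistic bookkeeping of the first step: justifying that message boundaries act as regeneration points under the Gilbert channel (whose state is carried over between consecutive packets) and that $E[T_\kappa]<\infty$. The latter holds because $p_{\rm e}<1$ forces a strictly positive per-attempt success probability, hence $E[N_\kappa]<\infty$, while each attempt takes a time bounded once the packet size is fixed (using \texttt{A2}, \texttt{A6}, \texttt{A8}); the former is precisely what \texttt{A7} supplies, as it replaces the true history-dependent channel state seen by the first transmission of each packet with its stationary version $\pi_{\rm c}$, thereby both decoupling successive packets and making $E[T_\kappa\mid L^{(\rm p)}_\kappa=x]$ a quantity independent of $\kappa$. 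I would state these two points explicitly, after which the renewal--reward citation closes the argument; the heavy-traffic condition \texttt{A5} is what guarantees that the receiver is fed continuously so that ``bits per unit time'' is the relevant normalization.
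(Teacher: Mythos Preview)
Your argument is correct and, in fact, more careful than the paper's own. Both proofs invoke the renewal--reward theorem of \cite[Sec.~3.9]{ROS92}, but at different granularities. The paper declares each ACK reception a renewal point, asserts that $\{T_\kappa\}$ is a renewal process, and writes $G_{\rm p}=E[R_\kappa]/E[T_\kappa]$ directly at the packet level with $E[R_\kappa]=\ell^{(\rm p)}-\ell^{(\rm h)}$; the conditioning on $L^{(\rm p)}_\kappa$ in the denominator is then just the tower rule. You instead observe---rightly---that consecutive packet sizes within one message are dependent (a run of body packets followed by one edge packet), so you place the regeneration at message boundaries, which are genuinely i.i.d.\ under \texttt{A1} and \texttt{A7}, and only then reduce to per-packet quantities via the Wald identity, $\pi^{(\rm E)}=1/E[X_i]$, and \eqref{eq: ell p}. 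Your route buys rigor: the paper's claim that $\{T_\kappa\}$ is a renewal process is, strictly speaking, false, though the formula survives because under \texttt{A7} the packet-level reward process is stationary and ergodic, so the same ratio limit holds. The paper's route buys brevity: by treating packets as if drawn i.i.d.\ from $F^{(\rm p)}$, it avoids the Wald step and the $E[X_i]$ bookkeeping altogether.
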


\begin{proof}
From the Markovian burst bit-error occurrence model mentioned in Section~\ref{sec: burst bit error}
the event point where a sender receives ACK is a renewal point.
Then, the stochastic process $\{T_\kappa\}$ becomes a renewal process.

Suppose that a reward $R_\kappa$ is earned at the time of the $\kappa$~th
renewal,
that is the mean number of bits by a receiver's SWP entity at interval $T_\kappa$.
Letting $Y(s)$ be the total reward earned by time $s$ [sec],
from \cite[Thorem~3.16]{ROS92},
we have
\begin{align}
\label{eq: G 1}
G &= \lim_{s \to \infty} \, \cfrac{Y(s)}{s} = \cfrac{E\left[R_\kappa\right]}{E\left[T_\kappa\right]}.
\end{align}
Here, 
\begin{align}
\label{eq: R}
E\left[R_\kappa\right] &= \displaystyle\int_{x=\ell^{(\rm{h})}}^{\ell^{(\rm{d})} + \ell^{(\rm{h})}} \, 
\Pr\left(L_\kappa^{(\rm{p})} = x\right) \, 
\left(x - \ell^{(\rm{h})}\right) \, dF^{(\rm{p})}(x) \notag \\
&= \ell^{(\rm{p})} - \ell^{(\rm{h})}.
\end{align}
Substitute \eqref{eq: R} into \eqref{eq: G 1},
we obtain \eqref{eq: G}.
\end{proof}

\medskip

The assumption {\tt A2} yields the form of $E[T_\kappa \, | \, L_\kappa^{(\rm{p})}=x, N_\kappa = n]$ 
given by
\begin{align}
\label{eq: E T N}
E\left[T_\kappa \, | \, L_\kappa^{(\rm{p})} =x, N_\kappa = n\right] &=
(n - 1) \, t_{\rm{out}} 
+ \cfrac{x + \ell^{(\rm{ACK})}}{\mu_{\rm{c}}} + t_{\rm{pro}}, 
\quad n = 1, 2, \cdots.
\end{align}

\begin{remark}
Equation~\eqref{eq: E T N} shows that overhead of
$\cfrac{\ell^{(\rm{h})} + \ell^{(\rm{ACK})}}{\mu_{\rm{c}}} + t_{\rm{pro}}$ per one packet transmission
is at least necessary
for one transmission of a packet with size equaled to $x$.
\end{remark}

To derive the form of $\Pr(N_\kappa = n \,|\, L_\kappa^{(\rm{p})}= x)$,
we introduce the following events:
\begin{description}

\item{\ $\mathcal{E}_{\rm{s}}(t, x, \eta)$:} event that satisfying:

\begin{itemize}

\item the transmission of a packet of $x$ bits was started at time $t$ [bit], 

\item it has been finished successfully, i.e., with no bit errors, and

\item the link state is $\eta$ at the time just after the transmission,
i.e., $\upsilon(t+x)=\eta$.

\end{itemize}

\item{\ $\mathcal{E}_{\rm{f}}(t, x, \eta)$:} 
the same event as $\mathcal{E}_{\rm{s}}(t, x, \eta)$,
except that the transmission failed
(i.e., the packet contains one or more erroneous bits). 
\end{description}

Then, we have the following lemma.

\begin{lemma}
\label{lemma: Ps Pf}

Let $\mathbf{P}_{\rm{s}}(x)$ and $\mathbf{P}_{\rm{f}}(x)$ be matrices 
whose $(\xi, \eta)$th entries are
conditional probabilities of events
$\mathcal{E}_{\rm{s}}(t, x, \eta)$ and $\mathcal{E}_{\rm{f}}(t, x, \eta)$, 
respectively, given that $\upsilon(t)=\xi$.
Then, we have
\begin{align}
\label{eq: P_s}
\mathbf{P}_{\rm{s}}(x) &\stackrel{\triangle}{=}
\left[
\Pr\Big(\mathcal{E}_{\rm{s}}(t, x, \eta) \,|\, \upsilon(t)=\xi\Big),
\xi \in \mathcal{C}, \eta \in \mathcal{C}
\right]\notag\\
&= \mathbf{Q}^x, \\
\intertext{and} 
\label{eq: P_f}
\mathbf{P}_{\rm{f}}(x) &\stackrel{\triangle}{=}
\left[
\Pr\left(\mathcal{E}_{\rm{f}}(t, x, \eta) \,|\, \upsilon(t)=\xi\right),
\xi \in \mathcal{C}, \eta \in \mathcal{C}
\right]\notag\\
&= 
\left[
\Pr\left(\upsilon(t + x)=\eta \,|\, \upsilon(t)=\xi\right)
- \Pr\left(\mathcal{E}_{\rm{s}}(t, x, \eta) \,|\, \upsilon(t)=\xi\right),
\xi \in \mathcal{C}, \eta \in \mathcal{C}
\right] \notag\\
&= \mathbf{P}_{\rm{c}}^x - \mathbf{P}_s(x)
= \mathbf{P}_{\rm{c}}^x - \mathbf{Q}^x.
\end{align}
Here,
$\mathbf{Q}$ is defined as
the matrix 
of which $(\xi, \eta)$th entry is the conditional probability that 
the link state changes to $\eta \in \mathcal{C}$ 
after a successful bit transmission,
given that it was $\xi \in \mathcal{C}$:
\begin{align}
\label{eq: Q}
\mathbf{Q} &\stackrel{\triangle}{=}
\left[\Pr\left(\upsilon(t+1)=\eta \quad\text{and no bit error occurs}\, | \, \upsilon(t)=\xi\right), \eta \in \mathcal{C}, \xi \in \mathcal{C}\right] \notag \\
%
%
&=\begin{pmatrix}
\left(1 - p^{(\rm{G})}\right) \, \left(1 - \lambda\right) & 
\left(1 - p^{(\rm{G})}\right) \, \lambda \\
\left(1 - p^{(\rm{B})}\right) \, \gamma & 
\left(1 - p^{(\rm{B})}\right) \, \left(1 - \gamma\right)
\end{pmatrix}.
\end{align}
\end{lemma}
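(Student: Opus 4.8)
The plan is to establish the two matrix identities by decomposing a length-$x$ bit transmission into $x$ consecutive single-bit steps and tracking both the link-state evolution and the ``no error so far'' condition. First I would observe that, by the Markov property of $\{\upsilon(t)\}$ together with the assumption that within each link state bit errors occur independently, the joint process consisting of the link state and the indicator ``no erroneous bit has occurred up to time $t$'' is itself Markovian; a single bit transmitted while in state $\xi$ keeps the ``clean'' status and moves the link to $\eta$ with probability $\bigl(1-p^{(\xi)}\bigr)\Pr(\upsilon(t+1)=\eta\mid\upsilon(t)=\xi)$, which is precisely the $(\xi,\eta)$ entry of $\mathbf{Q}$ as written in \eqref{eq: Q}. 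This gives the one-step base case.

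Next I would iterate: conditioning on the link state after the first successfully transmitted bit and using the Chapman--Kolmogorov-type argument, the probability of $\mathcal{E}_{\rm s}(t,x,\eta)$ given $\upsilon(t)=\xi$ is the $(\xi,\eta)$ entry of $\mathbf{Q}\cdot\mathbf{Q}^{x-1}$. A clean induction on $x$ then yields $\mathbf{P}_{\rm s}(x)=\mathbf{Q}^x$, which is \eqref{eq: P_s}. For \eqref{eq: P_f}, I would argue that, conditioned on $\upsilon(t)=\xi$ and $\upsilon(t+x)=\eta$, the transmission either succeeds (event $\mathcal{E}_{\rm s}$) or fails (event $\mathcal{E}_{\rm f}$), and these are mutually exclusive and exhaustive; hence the $(\xi,\eta)$ entries satisfy $\Pr(\upsilon(t+x)=\eta\mid\upsilon(t)=\xi)=\Pr(\mathcal{E}_{\rm s}(t,x,\eta)\mid\upsilon(t)=\xi)+\Pr(\mathcal{E}_{\rm f}(t,x,\eta)\mid\upsilon(t)=\xi)$. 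Since the first term is the $(\xi,\eta)$ entry of $\mathbf{P}_{\rm c}^x$ and the second is that of $\mathbf{P}_{\rm s}(x)=\mathbf{Q}^x$, subtraction gives $\mathbf{P}_{\rm f}(x)=\mathbf{P}_{\rm c}^x-\mathbf{Q}^x$.

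The step I expect to be the main obstacle is making the induction rigorous about the ``clean so far'' bookkeeping: I must be careful that the factorization of the success probability over the $x$ bits genuinely follows from the stated independence-within-state assumption and the Markov transition structure, rather than assuming it. Concretely, I would spell out that $\Pr(\text{no error on bit }k\mid\upsilon(t+k-1)=\xi,\ \text{history})=1-p^{(\xi)}$ depends only on the current state, which lets the per-bit success factors be absorbed into the entries of $\mathbf{Q}$ and commute with the state transitions; everything else is a routine product-of-matrices manipulation. No genuinely hard analytic estimate is needed here, so once the Markovian augmentation is justified the two equalities \eqref{eq: P_s} and \eqref{eq: P_f} follow directly.
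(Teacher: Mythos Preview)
Your argument is correct and is exactly the content the paper is relying on; the paper's own proof is a one-line ``clear from the definitions,'' so your Markov/induction unpacking of $\mathbf{P}_{\rm s}(x)=\mathbf{Q}^x$ and the complement argument for $\mathbf{P}_{\rm f}(x)=\mathbf{P}_{\rm c}^x-\mathbf{Q}^x$ simply make explicit what the authors leave implicit. There is no substantive difference in approach.
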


\medskip

\begin{proof}
From the definitions of $\mathbf{P}_{\rm{s}}(x)$ and $\mathbf{P}_{\rm{f}}(x)$, 
\eqref{eq: P_s} and \eqref{eq: P_f} are clear.
\end{proof}

\bigskip

\begin{proposition}
\label{pro: pr N L}
The form of $\Pr(N_\kappa = n \,|\, L_\kappa^{(\rm{p})} = x)$ is given by
\begin{align}
\label{eq: pr N L}
\Pr\left(N_\kappa = n \,|\, L_\kappa^{(\rm{p})} = x\right) &= 
\mathbf{\pi}_{\rm{c}} \, \left\{\mathbf{S}(x, t_{\rm{out}})\right\}^{n - 1} \, 
\mathbf{Q}^{x} \, \mathbf{e}, \quad n = 1, 2, \cdots, 
\end{align}
where $\mathbf{e} \stackrel{\triangle}{=} (1, 1)^T$ is a unit vector and
\begin{align}
\label{eq: S}
\mathbf{S}(x, t_{\rm{out}})
&\stackrel{\triangle}{=} 
\mathbf{P}_{\rm{c}}^{t_{\rm{out}}} - \mathbf{Q}^{x} \mathbf{P}_{\rm{c}}^{t_{\rm{out}} - x}. 
\end{align}

\end{proposition}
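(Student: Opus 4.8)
The plan is to decompose the event $\{N_\kappa = n\}$ into ``the first $n-1$ transmission attempts of the tagged packet fail and its $n$th attempt succeeds'', and to evaluate the probability of this run by pushing the law of the underlying link state through $n$ matrix factors, one per attempt, invoking Lemma~\ref{lemma: Ps Pf} and the Markov property of $\{\upsilon(t)\}$. Concretely, fix the packet, condition on $L_\kappa^{(\rm p)} = x$, and let $t_j$ (measured in bits) be the instant at which its $j$th attempt begins, $j = 1,\dots,N_\kappa$. By \texttt{A2} the timeout is the constant interval $t_{\rm out}$ bits, so after a failed attempt the next one starts exactly $t_{\rm out}$ bits later, i.e.\ $t_{j+1} = t_j + t_{\rm out}$ (with $t_{\rm out}\ge x$); by \texttt{A7}, $\upsilon(t_1)$ is distributed according to $\mathbf{\pi}_{\rm c}$.

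Next I would collapse one \emph{failed cycle} into a single transfer matrix. Over the $x$ bits of attempt $j$, the joint event ``the packet carries at least one erroneous bit and $\upsilon(t_j+x)=\eta$'', given $\upsilon(t_j)=\xi$, has probability $[\mathbf{P}_{\rm f}(x)]_{\xi\eta}$, and over the remaining $t_{\rm out}-x$ bits the link evolves freely, contributing $\mathbf{P}_{\rm c}^{\,t_{\rm out}-x}$; hence the state at $t_{j+1}$, given the state at $t_j$ and ``attempt $j$ failed'', is governed by
\[
\mathbf{P}_{\rm f}(x)\,\mathbf{P}_{\rm c}^{\,t_{\rm out}-x}
= \big(\mathbf{P}_{\rm c}^{x}-\mathbf{Q}^{x}\big)\mathbf{P}_{\rm c}^{\,t_{\rm out}-x}
= \mathbf{P}_{\rm c}^{\,t_{\rm out}}-\mathbf{Q}^{x}\mathbf{P}_{\rm c}^{\,t_{\rm out}-x}
= \mathbf{S}(x,t_{\rm out}),
\]
by Lemma~\ref{lemma: Ps Pf} and \eqref{eq: S}. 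For the final ($n$th) attempt the packet is received without error, which by Lemma~\ref{lemma: Ps Pf} transfers $\upsilon(t_n)$ according to $\mathbf{P}_{\rm s}(x)=\mathbf{Q}^{x}$; since the link state after that instant does not affect $N_\kappa$, I would sum it out by postmultiplying with $\mathbf{e}$. The Markov property makes the $n$ outcomes (fail,$\dots$,fail,succeed) conditionally independent given the chain $\upsilon(t_1),\dots,\upsilon(t_n)$, so chaining the $n-1$ failed-cycle factors with the terminal success factor and summing over the intermediate states gives exactly $\mathbf{\pi}_{\rm c}\,\mathbf{S}(x,t_{\rm out})^{\,n-1}\,\mathbf{Q}^{x}\,\mathbf{e}$, which is \eqref{eq: pr N L}. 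An equivalent route is induction on $n$: the base case $n=1$ is $\Pr(N_\kappa=1\mid L_\kappa^{(\rm p)}=x)=\mathbf{\pi}_{\rm c}\mathbf{Q}^{x}\mathbf{e}$ straight from Lemma~\ref{lemma: Ps Pf}, and the step conditions on the state reached after the first failed cycle, using the slightly more general fact that $\Pr(N_\kappa=n\mid L_\kappa^{(\rm p)}=x)=\mu\,\mathbf{S}(x,t_{\rm out})^{\,n-1}\mathbf{Q}^{x}\mathbf{e}$ whenever $\upsilon(t_1)$ has an arbitrary initial law $\mu$ on $\mathcal{C}$.

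The main obstacle is the bookkeeping of the discrete bit clock across a transmission-plus-timeout cycle: one must argue carefully that exactly $t_{\rm out}$ bit-steps of $\{\upsilon(t)\}$ separate consecutive attempts, so that the free-evolution factor is the deterministic power $\mathbf{P}_{\rm c}^{\,t_{\rm out}-x}$ rather than a random number of steps (which is where $t_{\rm out}\ge x$ enters), and that the per-bit construction of $\mathbf{Q}$ in \eqref{eq: Q} legitimately encodes ``no error over $x$ consecutive bits together with the induced state change''. Once $\mathbf{S}(x,t_{\rm out})$ has been identified with $\mathbf{P}_{\rm f}(x)\mathbf{P}_{\rm c}^{\,t_{\rm out}-x}$, the remainder is the routine Markov/matrix-product computation sketched above, together with the appeal to \texttt{A7} for the initial law $\mathbf{\pi}_{\rm c}$.
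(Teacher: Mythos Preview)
Your proposal is correct and follows essentially the same route as the paper: decompose $\{N_\kappa=n\}$ into $n-1$ failed attempts followed by a success, use Lemma~\ref{lemma: Ps Pf} to identify each failed cycle with $\mathbf{P}_{\rm f}(x)\mathbf{P}_{\rm c}^{\,t_{\rm out}-x}=\mathbf{S}(x,t_{\rm out})$ and the final success with $\mathbf{Q}^{x}$, then invoke the Markov property and assumption \texttt{A7} to chain the factors into the matrix product \eqref{eq: pr N L}. The paper spells out the sum over the intermediate link states $\xi_i,\eta_i$ explicitly before collapsing it to the same matrix expression, whereas you phrase this as a transfer-matrix argument (and also note the inductive variant and the constraint $t_{\rm out}\ge x$), but the substance is identical.
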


\medskip

\begin{proof}
We consider the following scenario:
\begin{itemize}

\item the $\kappa$~th packet whose size is $x$, i.e., $L_\kappa^{(\rm{p})} = x$,
was transmitted first time at time $t_0$ [bit], and

\item it has been transmitted unsuccessfully $n (\ge0)$ times and transmitted successfully
in the $(n+1)$~th transmission,
i.e., in the $n$~th retransmission, resulting in $N_\kappa = n$.
\end{itemize}

An example of a packet transmission sequence,
$\{\xi_i\}$ where $\xi = \upsilon(t_0 + i \, t_{\rm out})$ and
$\{\eta_i\}$ where $\eta = \upsilon(t_0 + i \, t_{\rm out} + x)$
when the packet of the second retransmission
has been transmitted successfully is shown in Fig.~\ref{fig: packet}.

From Lemma~\ref{lemma: Ps Pf} and Fig.~\ref{fig: packet},
the form of $\Pr(N_\kappa = n \,|\, L_\kappa^{(\rm{p})} = x)$ in this case
is given by
\begin{align}
&\Pr\left(N_\kappa = n \,|\, L_\kappa^{(\rm{p})} = x\right)
\notag \\
&= 
\sum_{\xi_0 \in \mathcal{C}} 
\sum_{\eta_0 \in \mathcal{C}} 
\sum_{\xi_1 \in \mathcal{C}} 
\sum_{\eta_1 \in \mathcal{C}} 
\sum_{\xi_2 \in \mathcal{C}}
\cdots
\sum_{\eta_{n-1} \in \mathcal{C}}
\sum_{\xi_n \in \mathcal{C}} \sum_{\eta_n \in \mathcal{C}} \notag \\
&\quad \Pr \Big(
\{\upsilon(t_0) = \xi_0\} \notag \\
&\quad
\cap \mathcal{E}_{\rm{f}}(t_0, x, \eta_0)\notag \\
&\quad
\cap \{\upsilon(t_0+x) = \eta_0\} \cap \{\upsilon(t_0+t_{\rm{out}}) = \xi_1\} \notag \\
&\quad 
\cap \mathcal{E}_{\rm{f}}(t_0+t_{\rm{out}}, x, \eta_1) \notag \\
&\quad
\cap \{\upsilon(t_0+t_{\rm{out}} + x) = \eta_1\}\cap \{\upsilon(t_0+2 \, t_{\rm{out}}) = \xi_2\}\notag \\
&\quad 
\cap \cdots\cdots\cdots\cdots\cdots\cdots \notag \\
&\quad 
\cap \mathcal{E}_{\rm{f}}(t_0+(n-1) \, t_{\rm{out}}, x, \eta_{n-1})\notag \\
&\quad
\cap \{\upsilon(t_0+(n-1) \, t_{\rm{out}} + x) = \eta_{n-1}\} 
\cap \{\upsilon(t_0+ n \, t_{\rm{out}}) = \xi_n\}
\notag \\
&\quad 
\cap \mathcal{E}_{\rm{s}} (t_0+n \, t_{\rm{out}}, x, \eta_n)
\Big) \notag \\
&=\mathbf{\pi}_{\rm{c}} 
\left\{
\mathbf{P}_{\rm{f}}(x) \mathbf{P}_{\rm{c}}^{t_{\rm{out}} - x}
\right\}^n
\mathbf{P}_s(x) \, \mathbf{e} \notag \\
&=\mathbf{\pi}_{\rm{c}} 
\left\{
(\mathbf{P}_{\rm{c}}^{x} - \mathbf{Q}^{x})
\mathbf{P}_{\rm{c}}^{t_{\rm{out}}- x}
\right\}^n
\mathbf{Q}^{x} \, \mathbf{e}
\notag \\
&= \mathbf{\pi}_{\rm{c}} \, \left\{\mathbf{S}(x, t_{\rm{out}})\right\}^n \, \mathbf{Q}^x \, \mathbf{e},
\end{align}
which is \eqref{eq: pr N L}.
\end{proof}

\begin{figure}
\centering{
\epsfig{file=./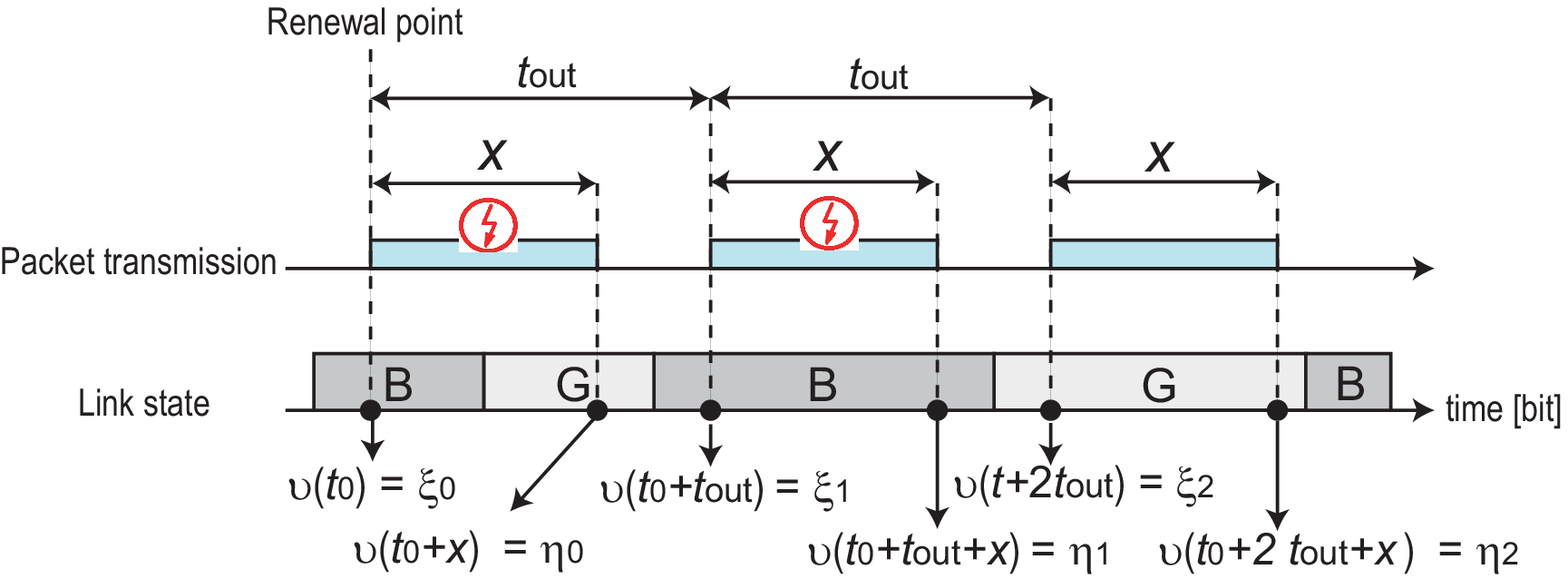, width=13cm} \\
\caption{Example of link state sequence
$\{\xi_i\}$ and $\{\eta_i\}$.
In this example,
$\{\xi_i: i = 0, 1, 2\} = \{\textrm{B}, \textrm{B}, \textrm{G}\}$
and
$\{\eta_i: i = 0, 1, 2\} = \{\textrm{G}, \textrm{B}, \textrm{G}\}$.}
\label{fig: packet}}
\end{figure}

\bigskip

\begin{proposition}
\label{pro: pr N L i i d}
%
Let us consider the \textit{i.i.d.}~bit-error model with $\lambda+\gamma=1$,
which is a special case described in Remark~\ref{remark: i.i.d. bit error model}.
In this case, we have
\begin{align}
\label{eq: pr N L iid}
\Pr(N_\kappa = n \,|\, L_\kappa^{(\rm{p})} = x) &=
\left(1 - h(x)\right) \, \left\{h(x)\right\}^{n - 1}, \quad & n = 1, 2, \cdots, 
\end{align}
where 
\begin{align}
\label{eq: h}
h(x) &= 1 - \left(1 - p_{\rm{e}}\right)^x.
\end{align}

\end{proposition}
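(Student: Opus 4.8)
The plan is to specialize Proposition~\ref{pro: pr N L} to the parameter regime $\lambda + \gamma = 1$ and show that the matrix formula \eqref{eq: pr N L} collapses to the scalar geometric form \eqref{eq: pr N L iid}. The key observation is that when $\lambda + \gamma = 1$, the transition matrix $\mathbf{P}_{\rm{c}}$ has rank one in the sense that both of its rows equal the stationary vector $\mathbf{\pi}_{\rm{c}} = (\pi^{(\rm{G})}, \pi^{(\rm{B})}) = (\gamma, \lambda)$; consequently $\mathbf{P}_{\rm{c}}^k = \mathbf{P}_{\rm{c}}$ for every $k \ge 1$, and more relevantly $\mathbf{P}_{\rm{c}}^k = \mathbf{e}\,\mathbf{\pi}_{\rm{c}}$ for $k \ge 1$. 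The same structural fact applies to $\mathbf{Q}$: inspecting \eqref{eq: Q} with $\gamma = 1 - \lambda$, one finds $\mathbf{Q} = (1 - p_{\rm{e}})\,\mathbf{e}\,\mathbf{\pi}_{\rm{c}}$ after using $p_{\rm{e}} = \pi^{(\rm{G})} p^{(\rm{G})} + \pi^{(\rm{B})} p^{(\rm{B})}$ from \eqref{eq: p_e}; indeed the first column of $\mathbf{Q}$ is $\big((1-p^{(\rm{G})})\gamma,\ (1-p^{(\rm{B})})\gamma\big)^T$, wait — more carefully, with $\lambda+\gamma=1$ each row of $\mathbf{Q}$ is proportional to $(\gamma,\lambda) = \mathbf{\pi}_{\rm{c}}$ with row-dependent scalar $(1-p^{(\rm{G})})$ or $(1-p^{(\rm{B})})$, so $\mathbf{Q}\,\mathbf{e} = \big(1-p^{(\rm{G})},\ 1-p^{(\rm{B})}\big)^T$ and $\mathbf{\pi}_{\rm{c}}\,\mathbf{Q} = (1-p_{\rm{e}})\,\mathbf{\pi}_{\rm{c}}$.

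\medskip

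Next I would compute $\mathbf{Q}^x$ for integer $x \ge 1$ by induction: since post-multiplying $\mathbf{\pi}_{\rm{c}}$ by $\mathbf{Q}$ gives $(1-p_{\rm{e}})\,\mathbf{\pi}_{\rm{c}}$, and each row of $\mathbf{Q}$ is a scalar multiple of $\mathbf{\pi}_{\rm{c}}$, one gets $\mathbf{Q}^x = (1-p_{\rm{e}})^{x-1}\,\mathbf{Q}$, hence $\mathbf{Q}^x\,\mathbf{e} = (1-p_{\rm{e}})^{x-1}\big(1-p^{(\rm{G})},\ 1-p^{(\rm{B})}\big)^T$ and $\mathbf{\pi}_{\rm{c}}\,\mathbf{Q}^x\,\mathbf{e} = (1-p_{\rm{e}})^x = 1 - h(x)$. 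Then I would evaluate $\mathbf{S}(x, t_{\rm{out}}) = \mathbf{P}_{\rm{c}}^{t_{\rm{out}}} - \mathbf{Q}^{x}\mathbf{P}_{\rm{c}}^{t_{\rm{out}} - x}$: using $\mathbf{P}_{\rm{c}}^{t_{\rm{out}}} = \mathbf{e}\,\mathbf{\pi}_{\rm{c}}$ and $\mathbf{Q}^x\mathbf{P}_{\rm{c}}^{t_{\rm{out}}-x} = (1-p_{\rm{e}})^{x-1}\mathbf{Q}\,\mathbf{e}\,\mathbf{\pi}_{\rm{c}}$ — since $\mathbf{P}_{\rm{c}}^{t_{\rm{out}}-x} = \mathbf{e}\,\mathbf{\pi}_{\rm{c}}$ and $\mathbf{Q}^x \mathbf{e}\,\mathbf{\pi}_{\rm{c}}$ has rows $(1-p_{\rm{e}})^x \mathbf{\pi}_{\rm{c}}$ or similar — one finds $\mathbf{S}(x,t_{\rm{out}})$ is again of the form $\mathbf{e}\,\mathbf{v}$ for some row vector, and the crucial quantity $\mathbf{\pi}_{\rm{c}}\,\{\mathbf{S}(x,t_{\rm{out}})\}^{n-1}$ reduces to a scalar power. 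Specifically $\mathbf{S}(x,t_{\rm{out}})$ acts on the relevant vectors as multiplication by $h(x) = 1 - (1-p_{\rm{e}})^x$, because $\mathbf{\pi}_{\rm{c}}\,\mathbf{S}(x,t_{\rm{out}}) = \mathbf{\pi}_{\rm{c}} - (1-p_{\rm{e}})^x\,\mathbf{\pi}_{\rm{c}} = h(x)\,\mathbf{\pi}_{\rm{c}}$, using that $\mathbf{\pi}_{\rm{c}}$ is stationary for $\mathbf{P}_{\rm{c}}$ and the $\mathbf{Q}^x$ computation above. Iterating, $\mathbf{\pi}_{\rm{c}}\,\{\mathbf{S}(x,t_{\rm{out}})\}^{n-1} = \{h(x)\}^{n-1}\,\mathbf{\pi}_{\rm{c}}$.

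\medskip

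Assembling the pieces: $\Pr(N_\kappa = n \mid L_\kappa^{(\rm{p})} = x) = \mathbf{\pi}_{\rm{c}}\,\{\mathbf{S}(x,t_{\rm{out}})\}^{n-1}\,\mathbf{Q}^x\,\mathbf{e} = \{h(x)\}^{n-1}\,\mathbf{\pi}_{\rm{c}}\,\mathbf{Q}^x\,\mathbf{e} = \{h(x)\}^{n-1}\,(1-p_{\rm{e}})^x = \{h(x)\}^{n-1}\,(1 - h(x))$, which is exactly \eqref{eq: pr N L iid}. \textbf{The main obstacle} I anticipate is bookkeeping the left-versus-right action of the rank-one matrices cleanly — in particular making sure the cross term $\mathbf{Q}^x \mathbf{P}_{\rm{c}}^{t_{\rm{out}}-x}$ is handled correctly when $t_{\rm{out}} - x$ could in principle be small, and confirming that $\mathbf{\pi}_{\rm{c}} \mathbf{S}(x,t_{\rm{out}}) = h(x)\mathbf{\pi}_{\rm{c}}$ holds as an exact row-vector identity rather than merely after right-multiplication by $\mathbf{e}$. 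A clean way around this is to prove the single eigen-relation $\mathbf{\pi}_{\rm{c}}\,\mathbf{S}(x,t_{\rm{out}}) = h(x)\,\mathbf{\pi}_{\rm{c}}$ first (a one-line computation once $\mathbf{\pi}_{\rm{c}}\mathbf{Q}^x = (1-p_{\rm{e}})^x\mathbf{\pi}_{\rm{c}}$ and $\mathbf{\pi}_{\rm{c}}\mathbf{P}_{\rm{c}}^k = \mathbf{\pi}_{\rm{c}}$ are in hand), then everything downstream is immediate. As a sanity check, one can verify the alternative special cases of Remark~\ref{remark: i.i.d. bit error model} ($\lambda = 0$ or $\gamma = 0$) give the same scalar formula, consistent with the remark's claim.
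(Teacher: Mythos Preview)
Your proposal is correct and follows essentially the same route as the paper's own proof: both arguments hinge on the eigen-relations $\mathbf{\pi}_{\rm{c}}\mathbf{P}_{\rm{c}}^k = \mathbf{\pi}_{\rm{c}}$ and $\mathbf{\pi}_{\rm{c}}\mathbf{Q} = (1-p_{\rm{e}})\,\mathbf{\pi}_{\rm{c}}$ (together with $\mathbf{Q}^k = (1-p_{\rm{e}})^{k-1}\mathbf{Q}$), from which $\mathbf{\pi}_{\rm{c}}\,\mathbf{S}(x,t_{\rm{out}}) = h(x)\,\mathbf{\pi}_{\rm{c}}$ follows and the matrix formula collapses to the scalar geometric one. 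Your identification of the clean path --- establish $\mathbf{\pi}_{\rm{c}}\,\mathbf{S} = h(x)\,\mathbf{\pi}_{\rm{c}}$ first as a row-vector identity and then iterate --- is exactly what the paper does.
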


\medskip

\begin{proof}
From $\lambda+\gamma=1$, we have
\begin{align}
\label{eq: P^k}
\mathbf{P}_{\rm{c}} &=
\begin{pmatrix}
\pi^{(\textrm{G})} & \pi^{(\textrm{B})} \\
\pi^{(\textrm{G})} & \pi^{(\textrm{B})}
\end{pmatrix}.
\end{align}
From $\pi^{(\textrm{G})} + \pi^{(\textrm{B})} = 1$,
$\mathbf{P}_{\rm{c}}^k$ for $k = 1, 2, \cdots,$ is given by
\begin{align}
\mathbf{P}_{\rm{c}}^k &= \mathbf{P}_{\rm{c}}, \quad\quad k= 1, 2, \cdots.
\end{align}
Furthermore, we have
\begin{align}
\label{eq: Q^k}
\mathbf{Q}^k &= (1 - \hat{p}_e)^{k-1} \mathbf{Q}.
\end{align}

Because of the definition of $\mathbf{\pi}_{\rm{c}}$,
that is the stationary probability vector of $\mathbf{P}_{\rm{c}}$,
we obtain
\begin{align}
\label{eq: pi_P_k}
\mathbf{\pi}_{\rm{c}} &= \mathbf{\pi}_{\rm{c}} \, \mathbf{P}_\textrm{c}^k, \quad\quad k= 1, 2, \cdots.
\end{align}
Because $1 - \hat{p}_e$ is the eigen-value
corresponding to the eigen-vector $\mathbf{\pi}_{\rm{c}}$ of matrix $\mathbf{Q}$,
we have
\begin{align}
\label{eq: pi_Q}
\mathbf{\pi}_{\rm{c}} \, \mathbf{Q} &= (1 - \hat{p}_e) \, \mathbf{\pi}_{\rm{c}}.
\end{align}
Hence, we obtain
\begin{align}
\label{eq: pi_Q_k}
\mathbf{\pi}_{\rm{c}} \, \mathbf{Q}^k &= (1 - \hat{p}_e)^k \, \mathbf{\pi}_{\rm{c}}, \quad\quad k= 1, 2, \cdots.
\end{align}

The form of $\mathbf{\pi}_{\rm{c}} \, \mathbf{S}(x, t_{\rm{out}})$ is given by
%
\begin{align}
\label{eq: pi_c_S}
\mathbf{\pi}_{\rm{c}} \, \mathbf{S}(x, t_{\rm{out}})
&=
\mathbf{\pi}_{\rm{c}} \, 
\left(\mathbf{P}_{\rm{c}}^{t_{\rm{out}}} - \mathbf{Q}^{x} \, \mathbf{P}_{\rm{c}}^{t_{\rm{out}} - x}\right)
&&\text{(from \eqref{eq: S})} \notag \\ 
%
&=
\mathbf{\pi}_{\rm{c}} \, \mathbf{P}_{\rm{c}}^{t_{\rm{out}}}
- \mathbf{\pi}_{\rm{c}} \, \mathbf{Q}^{x} \, \mathbf{P}_{\rm{c}}^{t_{\rm{out}} - x} \notag \\ 
%
&= \mathbf{\pi}_{\rm{c}} \, \mathbf{P}_\textrm{c} 
- \mathbf{\pi}_{\rm{c}} \, \mathbf{Q}^{x} \, \mathbf{P}_{\rm{c}}^{t_{\rm{out}} - x} 
&&\text{(from \eqref{eq: pi_P_k})} \notag \\ 
%
&= \mathbf{\pi}_{\rm{c}} \, \mathbf{P}_\textrm{c} - (1 - \hat{p}_e)^{x-1} \, 
\mathbf{\pi}_{\rm{c}} \, \mathbf{Q} \, \mathbf{P}_{\rm{c}}
&&\text{(from \eqref{eq: pi_Q_k})} \notag \\
%
&= \mathbf{\pi}_{\rm{c}} \, \mathbf{P}_\textrm{c} - (1 - \hat{p}_e)^{x} \, 
\mathbf{\pi}_{\rm{c}} \, \mathbf{P}_{\rm{c}} 
&&\text{(from \eqref{eq: pi_Q})} \notag \\
%
%
&= \{1 - (1 - \hat{p}_e)^{x}\} \, \mathbf{\pi}_{\rm{c}} \, \mathbf{P}_{\rm{c}} \notag \\
%
&= h(x) \, \mathbf{\pi}_{\rm{c}} \, \mathbf{P}_{\rm{c}} 
&&\text{(from \eqref{eq: h})} \notag \\
&= h(x) \, \mathbf{\pi}_{\rm{c}}.
&&\text{(from \eqref{eq: pi_P_k})}
\end{align}
Then, in a recursive manner,
we obtain
\begin{align}
\label{eq: pi S n}
\mathbf{\pi}_{\rm{c}} \, \, \mathbf{S}(x, t_{\rm{out}})^n &=
\left\{h(x)\right\}^n \, \mathbf{\pi}_{\rm{c}}, \quad\quad n = 2, 3, \cdots,
\end{align}

On the other hand,
we have
\begin{align}
\label{eq: pi Q e}
\mathbf{\pi}_{\rm{c}} \, \mathbf{Q}^{x} \, \mathbf{e}
%
&= (1 - \hat{p}_e)^{x-1} \, \mathbf{\pi}_{\rm{c}} \, \mathbf{Q} \, \mathbf{e} \notag \\
%
&= (1 - \hat{p}_e)^{x-1} \, (1 - \hat{p}_e) \, \mathbf{\pi}_{\rm{c}} \, \mathbf{e} 
&&\text{(from \eqref{eq: pi_Q})} \notag \\ 
&= (1 - \hat{p}_e)^{x}.
\end{align}

Substituting \eqref{eq: pi_c_S} and \eqref{eq: pi Q e} 
into \eqref{eq: pr N L},
we obtain \eqref{eq: pr N L iid}.

\end{proof}


\begin{proposition}
\label{pro: G infinite}
The form of $E[T_\kappa \, | \, L_\kappa^{(\rm{p})} = x]$ in \eqref{eq: G} is given by
\begin{align}
\label{eq: E T L}
E\left[T_\kappa\, | \, L_\kappa^{(\rm{p})} = x\right] &=
t_{\rm{out}} \, \left[\mathbf{\pi}_{\rm{c}} 
\left(\mathbf{I} - \mathbf{S}(x, t_{\rm{out}})\right)^{-1} \, \mathbf{e} - 1\right]
+ \cfrac{x + \ell^{(\rm{ACK})}}{\mu_{\rm{c}}} + t_{\rm{pro}}. 
\end{align}
\end{proposition}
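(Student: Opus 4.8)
The plan is to obtain $E\!\left[T_\kappa \mid L_\kappa^{(\rm{p})} = x\right]$ by conditioning on the number of transmissions $N_\kappa$ and summing the resulting series, using \eqref{eq: E T N} for the inner conditional expectation and Proposition~\ref{pro: pr N L} for the law of $N_\kappa$. First I would write
\begin{align*}
E\!\left[T_\kappa \mid L_\kappa^{(\rm{p})} = x\right]
&= \sum_{n=1}^{\infty} E\!\left[T_\kappa \mid L_\kappa^{(\rm{p})} = x,\, N_\kappa = n\right]\,
\Pr\!\left(N_\kappa = n \mid L_\kappa^{(\rm{p})} = x\right) \\
&= \sum_{n=1}^{\infty} \left[(n-1)\,t_{\rm{out}} + \frac{x + \ell^{(\rm{ACK})}}{\mu_{\rm{c}}} + t_{\rm{pro}}\right]
\mathbf{\pi}_{\rm{c}}\,\mathbf{S}(x, t_{\rm{out}})^{\,n-1}\,\mathbf{Q}^{x}\,\mathbf{e},
\end{align*}
and split it into the part carrying the factor $\tfrac{x + \ell^{(\rm{ACK})}}{\mu_{\rm{c}}} + t_{\rm{pro}}$ and the part carrying the factor $t_{\rm{out}}$.

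For the two summations I would invoke the matrix geometric-series identities $\sum_{m\ge 0}\mathbf{S}^{m} = (\mathbf{I}-\mathbf{S})^{-1}$ and $\sum_{m\ge 0}(m+1)\mathbf{S}^{m} = (\mathbf{I}-\mathbf{S})^{-2}$, whence $\sum_{m\ge 0} m\,\mathbf{S}^{m} = (\mathbf{I}-\mathbf{S})^{-2} - (\mathbf{I}-\mathbf{S})^{-1}$, applied with $\mathbf{S} = \mathbf{S}(x,t_{\rm{out}})$ and $m = n-1$. The key simplification is that $\mathbf{P}_{\rm{c}}$ is stochastic, so $\mathbf{P}_{\rm{c}}^{k}\mathbf{e} = \mathbf{e}$ for every $k \ge 0$; combined with \eqref{eq: S} this yields
\begin{align*}
\bigl(\mathbf{I} - \mathbf{S}(x,t_{\rm{out}})\bigr)\mathbf{e}
= \mathbf{e} - \mathbf{P}_{\rm{c}}^{t_{\rm{out}}}\mathbf{e} + \mathbf{Q}^{x}\mathbf{P}_{\rm{c}}^{t_{\rm{out}}-x}\mathbf{e}
= \mathbf{Q}^{x}\mathbf{e},
\end{align*}
i.e. $(\mathbf{I}-\mathbf{S})^{-1}\mathbf{Q}^{x}\mathbf{e} = \mathbf{e}$. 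Substituting this, the first part collapses to $\bigl(\tfrac{x + \ell^{(\rm{ACK})}}{\mu_{\rm{c}}} + t_{\rm{pro}}\bigr)\mathbf{\pi}_{\rm{c}}\mathbf{e} = \tfrac{x + \ell^{(\rm{ACK})}}{\mu_{\rm{c}}} + t_{\rm{pro}}$, using $\mathbf{\pi}_{\rm{c}}\mathbf{e} = \pi^{(\rm{G})} + \pi^{(\rm{B})} = 1$; in the second part $\mathbf{\pi}_{\rm{c}}(\mathbf{I}-\mathbf{S})^{-1}\mathbf{Q}^{x}\mathbf{e} = \mathbf{\pi}_{\rm{c}}\mathbf{e} = 1$ and $\mathbf{\pi}_{\rm{c}}(\mathbf{I}-\mathbf{S})^{-2}\mathbf{Q}^{x}\mathbf{e} = \mathbf{\pi}_{\rm{c}}(\mathbf{I}-\mathbf{S})^{-1}\mathbf{e}$, leaving $t_{\rm{out}}\bigl[\mathbf{\pi}_{\rm{c}}(\mathbf{I}-\mathbf{S})^{-1}\mathbf{e} - 1\bigr]$. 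Adding the two contributions gives exactly \eqref{eq: E T L}.

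The main obstacle is not the algebra but justifying the analytic manipulations: one must verify that the spectral radius of $\mathbf{S}(x,t_{\rm{out}})$ is strictly below $1$, so that $(\mathbf{I}-\mathbf{S})^{-1}$ exists, the two matrix series converge, and the interchange of summation with the matrix products is legitimate. This holds because each transmission succeeds with probability at least $(1-p^{(\rm{B})})^{x} > 0$ (using $p^{(\rm{B})} < 1$), so $N_\kappa$ has a geometrically decaying tail; equivalently, $\mathbf{S}(x,t_{\rm{out}})\mathbf{e} = \mathbf{e} - \mathbf{Q}^{x}\mathbf{e}$ is entrywise strictly less than $\mathbf{e}$, which forces $\rho\bigl(\mathbf{S}(x,t_{\rm{out}})\bigr) < 1$. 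I would also remark that the identity $\mathbf{\pi}_{\rm{c}}(\mathbf{I}-\mathbf{S})^{-1}\mathbf{Q}^{x}\mathbf{e} = 1$ is precisely the statement $\sum_{n\ge 1}\Pr\!\left(N_\kappa = n \mid L_\kappa^{(\rm{p})} = x\right) = 1$, a consistency check on Proposition~\ref{pro: pr N L}.
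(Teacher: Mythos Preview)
Your proposal is correct and follows essentially the same route as the paper: condition on $N_\kappa$, invoke \eqref{eq: E T N} and Proposition~\ref{pro: pr N L}, sum the resulting matrix geometric series, and simplify via the key identity $\mathbf{Q}^{x}\mathbf{e} = (\mathbf{I}-\mathbf{S}(x,t_{\rm{out}}))\mathbf{e}$, which both you and the paper derive from the stochasticity of $\mathbf{P}_{\rm{c}}$. The only differences are cosmetic (the paper sums $\sum_{n\ge 1}(n-1)\mathbf{S}^{n-1}=\mathbf{S}(\mathbf{I}-\mathbf{S})^{-2}$ and then rewrites $\mathbf{S}=\mathbf{I}-(\mathbf{I}-\mathbf{S})$, whereas you split $\sum m\,\mathbf{S}^{m}$ as $(\mathbf{I}-\mathbf{S})^{-2}-(\mathbf{I}-\mathbf{S})^{-1}$) and the fact that you add a spectral-radius justification that the paper omits entirely.
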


\medskip

\begin{proof}
The form of $E\left[T_\kappa\, | \, L_\kappa^{(\rm{p})} = x\right]$
is given by
\begin{align}
\label{eq: tilde E T}
E\left[T_\kappa\, | \, L_\kappa^{(\rm{p})} = x\right] &= 
\sum_{n=1}^\infty \, \Pr(N_\kappa = n \,|\, L_\kappa^{(\rm{p})} = x) \, 
E\left[T_\kappa \, | \, L_\kappa^{(\rm{p})} =x, N_\kappa = n \right] \notag \\
&=\sum_{n=1}^\infty \, \Pr(N_\kappa = n \,|\, L_\kappa^{(\rm{p})} = x) \, 
\left[(n -1) \, \hat{t}_{\rm{out}}
+ \cfrac{x + \ell^{(\rm{ACK}}}{\mu_{\rm{c}}}\right] 
+ t_{\rm{pro}} \notag \\
&= \hat{t}_{\rm{out}} \, \sum_{n=1}^\infty \, (n -1) \, \Pr(N_\kappa = n \,|\, L_\kappa^{(\rm{p})} = x)
+ \cfrac{x + \ell^{(\rm{ACK}}}{\mu_{\rm{c}}} + t_{\rm{pro}}.
\end{align}
From \eqref{eq: pr N L},
the form $\sum_{n=1}^\infty \, (n -1) \, \Pr(N_\kappa = n \,|\, L_\kappa^{(\rm{p})} = x)
(= E[N_\kappa \, | \, L_\kappa^{(\rm{p})} = x] - 1)$ in \eqref{eq: tilde E T})
can be written as
\begin{align}
\label{eq: Pr N L ap}
\sum_{n=1}^\infty \, (n -1) \, \Pr(N_\kappa = n \,|\, L_\kappa^{(\rm{p})} = x)
&= \mathbf{\pi}_{\rm{c}} \,
\left[
\sum_{n=1}^{\infty} (n - 1)\, \mathbf{S}(x, t_{\rm{out}})^{n-1}
\right]\,
\mathbf{Q}^x \, \mathbf{e}
\notag \\
&= \mathbf{\pi}_{\rm{c}} \, \mathbf{S}(x, t_{\rm{out}}) 
\big(
\mathbf{I} - \mathbf{S}(x, t_{\rm{out}})
\big)^{-2} \mathbf{Q}^x \, \mathbf{e}.
\end{align}
Because $\mathbf{P}_{\rm{c}}$ is a probability matrix,
we obtain
\begin{align}
\label{eq: P_c e}
\mathbf{P}_{\rm{c}}^{t_{\rm{out}}} \, \mathbf{e} &= 
\mathbf{P}_{\rm{c}}^{t_{\rm{out}} - x} \, \mathbf{e} = \mathbf{e}.
\end{align}
From \eqref{eq: P_c e},
$\mathbf{S}(x, t_{\rm{out}}) \, \mathbf{e}$ is given by 
\begin{align}
\label{eq: R e}
\mathbf{S}(x, t_{\rm{out}}) \, \mathbf{e} &=
\left(
\mathbf{P}_{\rm{c}}^{t_{\rm{out}}} - \mathbf{Q}^x \, \mathbf{P}_{\rm{c}}^{t_{\rm{out}}-x}\right) \, 
\mathbf{e} \notag \\
&= \mathbf{e} - \mathbf{Q}^x \, \mathbf{e}.
\end{align}
Equation~\eqref{eq: R e} can be changed into
\begin{align}
\label{eq: Q e}
\mathbf{Q}^x \, \mathbf{e} 
&= \mathbf{e} - \mathbf{S}(x, t_{\rm{out}}) \, \mathbf{e} \notag \\
&= \left(\mathbf{I} - \mathbf{S}(x, t_{\rm{out}})\right) \, \mathbf{e}.
\end{align}
Substitution of \eqref{eq: Q e} into \eqref{eq: Pr N L ap} yields
\begin{align}
\label{eq: ap3 last}
\sum_{n=1}^\infty \, (n -1) \, \Pr(N_\kappa = n \,|\, L_\kappa^{(\rm{p})} = x) 
&= \mathbf{\pi}_{\rm{c}} \, \mathbf{S}(x, t_{\rm{out}})
\left(
\mathbf{I} - \mathbf{S}(x, t_{\rm{out}})
\right)^{-1} \, \mathbf{e} \notag \\
&= \mathbf{\pi}_{\rm{c}} \, \left[\mathbf{I} - (\mathbf{I} - \mathbf{S}(x, t_{\rm{out}}) \right]
\left(
\mathbf{I} - \mathbf{S}(x, t_{\rm{out}})
\right)^{-1} \, \mathbf{e} \notag \\
&= \mathbf{\pi}_{\rm{c}} \left(
\mathbf{I} - \mathbf{S}(x, t_{\rm{out}})
\right)^{-1} \, \mathbf{e} - \mathbf{\pi}_{\rm{c}} \, \mathbf{I} \, \mathbf{e}\notag \\
&= \mathbf{\pi}_{\rm{c}} \left(
\mathbf{I} - \mathbf{S}(x, t_{\rm{out}})
\right)^{-1} \, \mathbf{e} - 1.
\end{align}
By substituting \eqref{eq: ap3 last} into \eqref{eq: tilde E T},
we obtain \eqref{eq: E T L}.
\end{proof}

\begin{remark}
The term of $\mathbf{\pi}_{\rm{c}} (\mathbf{I} - \mathbf{S}(x, t_{\rm{out}}))^{-1} \, \mathbf{e} - 1$
in \eqref{eq: E T L}
implies the mean retransmission times of the packet whose size is $x$,
i.e., $E[N_\kappa\, | \, L_\kappa^{(\rm{p})} = x] - 1$.
\end{remark}


\section{Numerical results and discussions}
\label{sec: results}

In this section,
we answer the research questions mentioned in section~\ref{sec: Introduction}.
First,
we explain the assumptions made for numerical results.
Next, we validate the constant packet-size assumption
to answer research question Q1.
In final, we discuss the relationship between payload size and goodput under several bit-error link conditions
to answer research question Q2.

\subsection{Assumptions}

In the following numerical results,
we assume that 
1) link capacity $\mu_{\rm{c}}$: $1$~Mbps, and 
2) Bit-error rates of link state $\{\rm{G}, \rm{B}\}$: $p^{(\rm{G})} = 0$, $p^{(\rm{B})}= 1$,
3) header size: $\ell^{(\rm{h})}$ (= $\ell^{(\rm{ACK}}$) = 38~bytes,
4) timeout value $\hat{t}_{\rm{out}}$: $100$~msec, and
5) propagation delay $t_{\rm{pro}}$: $1$~msec.

To answer research questions comprehensively, 
we consider the simple case, i.e., case of constant message size.
Thus,
message sizes are assumed to be distributed according to $F^{(\rm{m})}(x)$ given by
\begin{align}
\label{eq: F m constant}
F^{(\rm{m})}(x) &= \textbf{1}\left(x - \ell^{(m)}\right).
\end{align}
The constant message size $\ell^{(m)}$ is assumed to be
$4000$~bytes, 
which is the approximate measured mean message size of Web object \cite{MOL00}.

\subsection{Validation of constant packet-size assumption}
\label{sec: validation}

From the case of $n_{\rm{d}} = 1$, $w^{(\rm{p})}_1$, and $l^{(\rm{p})}_1 = \ell^{(m)}$ for example~\ref{example: Case of discrete message-size distributions},
the packet-size distribution $F^{(\rm{p})}(x)$ is given by
\begin{align}
F^{(\rm{p})}(x) &=
\left(1 - \pi^{(\rm{E})}\right) \, \textbf{1} \left(x - \ell^{(\rm{d})} - \ell^{(\rm{h})} \right)
+ \pi^{(\rm{E})} \, \textbf{1} \left(x - \left\{\ell^{(m)} - \left(r_{\rm{c}} - 1\right) \, \ell^{(\rm{d})} + \ell^{(\rm{h})}\right\}\right), \notag \\
\pi^{(\rm{E})}&= \left\lceil \cfrac{\ell^{(\rm{m})}}{\ell^{(\rm{d})}} \right\rceil.
%
\end{align}
%

\begin{figure}
\centering
\epsfig{file=./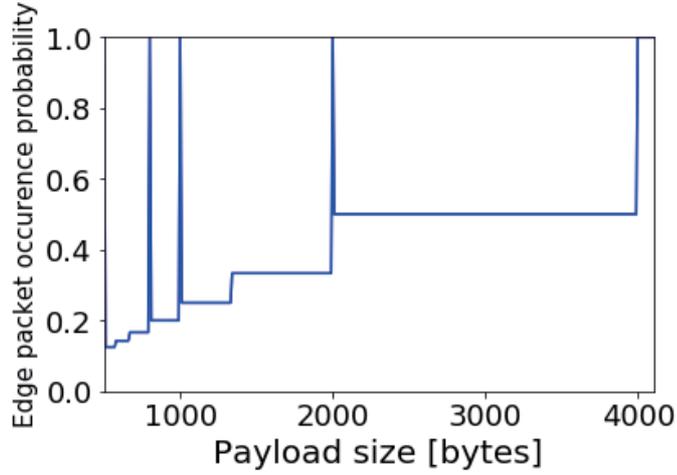, width=9cm} \\
\caption{Edge-packet occurrence probability $\pi^{(\rm{E})}$ versus payload size $\ell^{(\rm{d})}$.}
\label{fig: edge}
\end{figure}

Figure~\ref{fig: edge} shows edge-packet occurrence probability $\pi^{(\rm{E})}$ versus payload size $\ell^{(\rm{d})}$.
From this figure, edge-packet occurrence probability $\pi^{(\rm{E})}$ has the stepwise tendency.
We note that the value of $\pi^{(\rm{E})}$ is one but
all packets are constant in size
when message size is a multiple of payload size, such as 500, 1000, 2000, and 4000~bytes.

\begin{figure}
\centering
\epsfig{file=./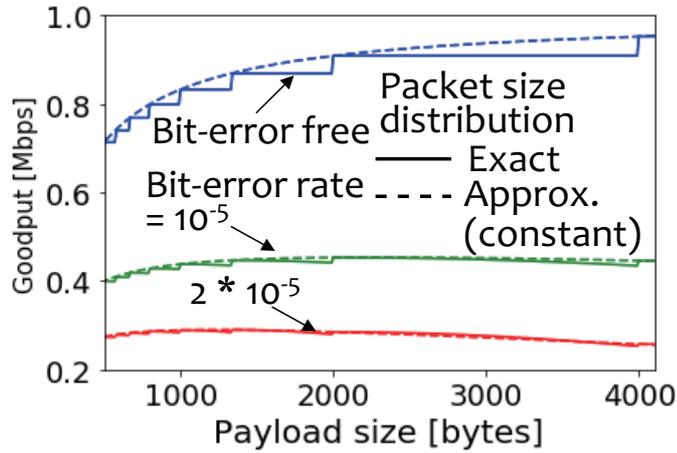, width=9cm} \\
\caption{Goodput $G_{\rm{p}}$ versus payload size $\ell^{(\rm{d})}$
for different bit-error rates.}
\label{fig: exact}
\end{figure}

Figure~\ref{fig: exact} shows goodput $G_{\rm{p}}$ versus payload size $\ell^{(\rm{d})}$
for different bit-error rates $p_{\rm{e}}$.
In this figure, dashed line means goodput using constant packet-size approximation.
From this figure this approximation error is not negligible under low bit-error rate.
On the other hand,
under high bit-error rate,
this error is negligible.
The reason for this is that
the number of retransmissions of packet whose size is $\ell^{(\rm{d})}+\ell^{(\rm{h})}$, 
which is the maximum packet size,
are dominant in the total number of transmissions of packets
because the packet is more likely to be corrupted (or retransmitted)
if the packet size is larger 
\cite{IKE05_RPSP_MSWiM}.

\subsection{Impact of payload size to goodput
in case of independent bit errors}

\begin{figure}
\centering
\epsfig{file=./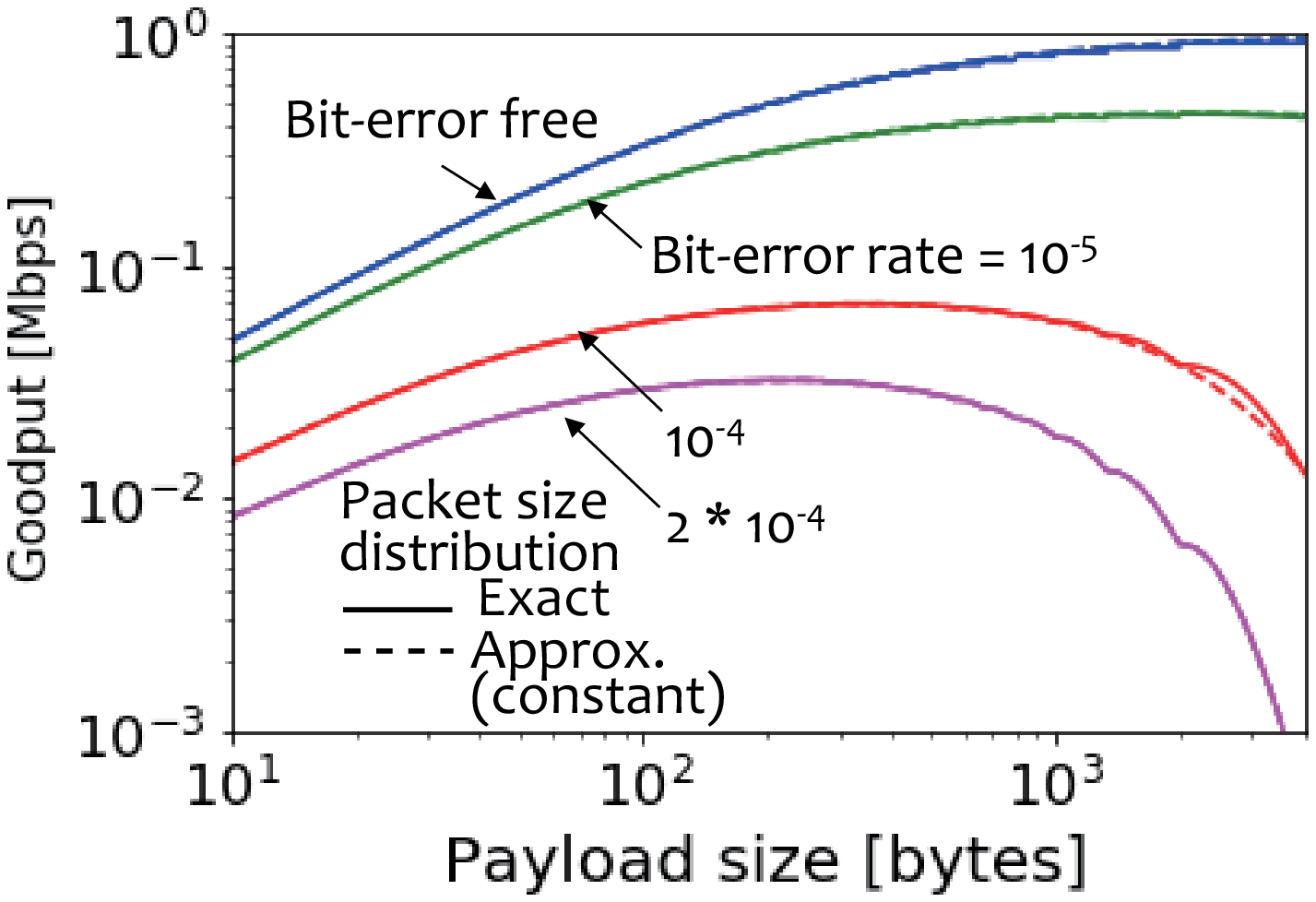, width=9cm} \\
\caption{Goodput $G_{\rm{p}}$ versus payload size $\ell^{(\rm{d})}$
for different bit-error rates $p_{\rm{e}}$
in case of i.i.d. bit errors.}
\label{fig: iid}
\end{figure}

Figure~\ref{fig: iid}
shows goodput $G_{\rm{p}}$ versus payload size $\ell^{(\rm{d})}$
for different bit-error rates $p_{\rm{e}}$
in case of i.i.d. bit errors.
This figure shows
that the curves of goodput are concave
under high bit-error rates.
The reason for this is as follows:

\begin{itemize}

\item If payload size is enough small, 
the communication network is operating in efficiency
because an overheating transmitting PCI and acknowledgements is not negligible.

\item On the other hand,
as payload size increases, 
the packet is more likely to be corrupted.
Hence,
it causes more retransmissions, resulting in reduced goodput. 

\end{itemize}

From the above argument,
there exists an optimum payload size
in the sense of maximizing goodput.

\subsection{Impact of payload size to goodput
in case of burst bit errors}

\begin{figure}
\centering
\epsfig{file=./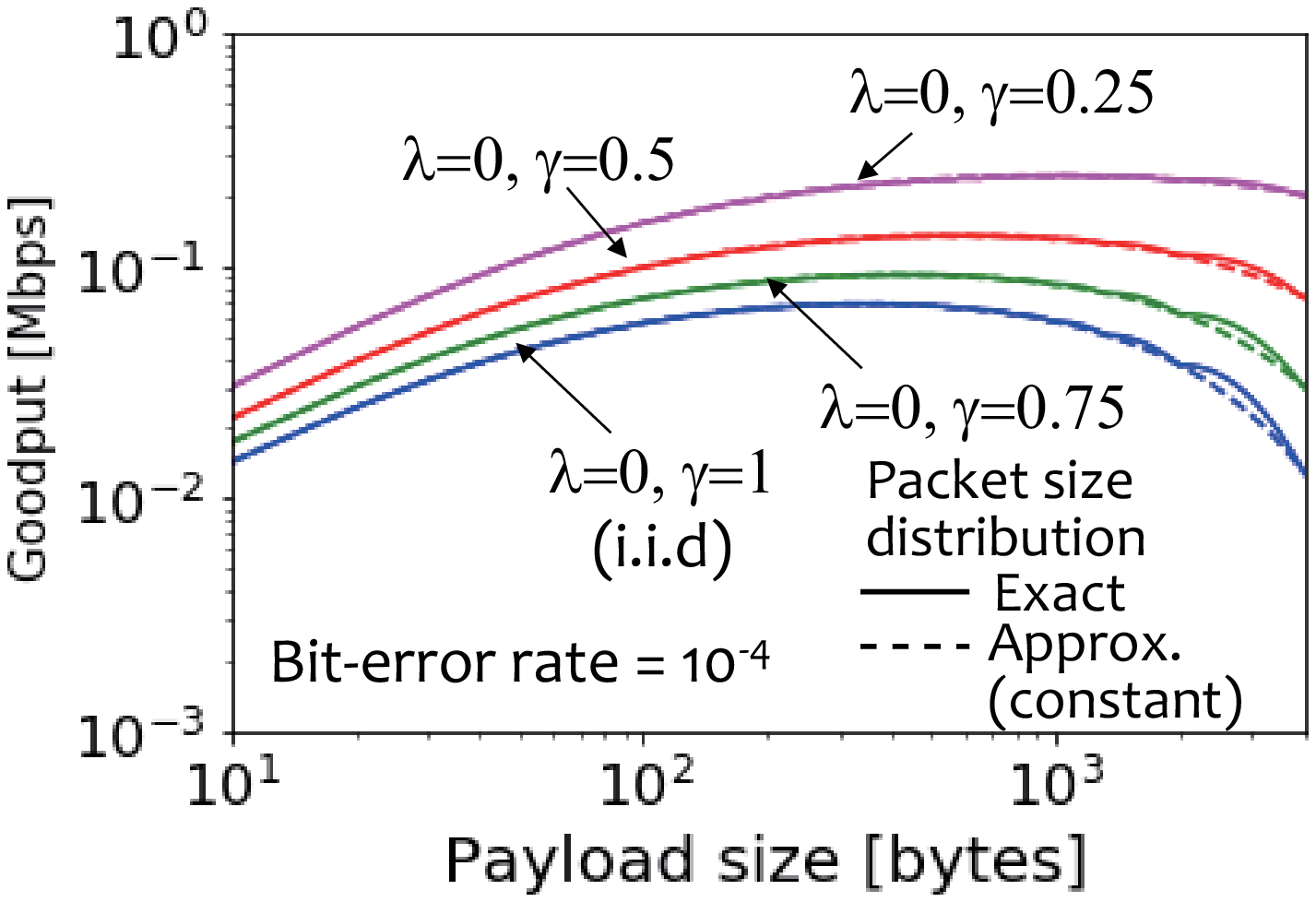, width=9cm} \\
\caption{Goodput $G_{\rm{p}}$ versus payload size $\ell^{(\rm{d})}$
for different mean bit-error burst lengths $\gamma^{-1}$
in the case of mean bit-error rate $\hat{p}_{\rm{e}}$ equaled to $10^{-4}$.}
\label{fig: burst}
\end{figure}

Final, we investigate the effect of mean bit-error burst length on goodput.
Figure~\ref{fig: burst} 
shows goodput $G_{\rm{p}}$ versus payload size $\ell^{(\rm{d})}$
for different mean bit-error burst lengths $\gamma^{-1}$
in the case of mean bit-error rate $\hat{p}_{\rm{e}}$ equaled to $10^{-4}$.
This figure shows that
the larger mean bit-error burst length yields less concave curves of goodput.


\section{Conclusion}
\label{sec: conclusion}

This paper investigated the effect of payload size
on goodput for wireless networks where message segmentations occur
and corrupted packets are recovered by a stop-and-wait protocol.
To achieve this,
we derived the analytical form of goodput for wireless networks
using segmented packet-size distribution,
given a message-size distribution and a payload size.

From numerical results,
we show that the traditional constant packet-size assumption is not justified in the case of low bit error rates.
Furthermore,
we showed that the curves of goodput are concave in payload size
under high bit error rates.
In addition,
we indicated that the larger mean bit-error burst length yields less concave curves of goodput.

The remaining issues include 
the investigation of the effect of payload size
on goodput for various message size distributions,
the extension of our model to realistic protocols rather than a simple stop-and wait protocol,
and the development of simple and cost-effective payload adaptation algorithm
in a wireless environment whose condition is dynamically changed.

\section*{Acknowledgment}

This work was supported in part by JSPS KAKENHI Grant Number JP15K00139.


\bibliographystyle{IEEEtran}
\bibliography{paper}

\end{document}